\documentclass[lettersize,journal]{IEEEtran}
\usepackage{amsmath,amsfonts}
\usepackage{array}
\usepackage{textcomp}
\usepackage{stfloats}
\usepackage{url}
\usepackage{verbatim}
\usepackage{graphicx}
\usepackage{amsmath,amsthm,amssymb}
\usepackage{mathrsfs}
\usepackage{epstopdf}
\usepackage{authblk}
\usepackage{color}
\usepackage{stmaryrd}
\usepackage{subfigure}
\usepackage{booktabs}
\usepackage{threeparttable}
\usepackage{multirow}
\usepackage{colortbl}
\usepackage{caption}
\usepackage{float}
\usepackage{makecell}
\usepackage{dsfont}
\usepackage{cite}
\usepackage{hyperref}
\usepackage{enumitem}
\usepackage{mathtools}
\usepackage{algorithm}
\usepackage{algpseudocode}

\newtheorem{theorem}{Theorem}

\newtheorem{example}{Example}
\newtheorem{remark}{Remark}
\newtheorem{assumption}{Assumption}
\newtheorem{corollary}{Corollary}

\usepackage{bookmark}

\graphicspath{{figure/}}

\def\BibTeX{{\rm B\kern-.05em{\sc i\kern-.025em b}\kern-.08em
    T\kern-.1667em\lower.7ex\hbox{E}\kern-.125emX}}
\usepackage{balance}
\begin{document}
\title{Distributed Nash Equilibrium Seeking for Open Multi-Coalition Games: A Mass-Preserving Gradient-Tracking-Based Algorithm}
\author{Daning~Lei,~Jiamin~Wang,~Jian~Liu,~Feng~Xiao,~Yuanshi~Zheng
\thanks{This work has been supported in part by the Key R\& D program of Shaanxi Province under Grant 2025CY-YBXM-100.} 
\thanks{D. Lei, J. Wang, J. Liu and Y. Zheng are with School of
Mechano-electronic Engineering, Xidian University, Xi'an,
710071, China. E-mail: leidaning@outlook.com; wangjiamin21@outlook.com; liujianzym@outlook.com; zhengyuanshi2005@163.com.}
\thanks{F. Xiao is with the School of Control and Computer Engineering, North China Electric Power University, Beijing, 102206, China. E-mail: fengxiao@ncepu.edu.cn.}}

\maketitle

\begin{abstract}
	This paper studies distributed information aggregation and Nash equilibrium
	seeking over open multi-coalition networks, where both agents and coalitions
	may dynamically enter and leave the network.
    The varying network population poses a fundamental challenge to conventional gradient-tracking methods, as agent departures may cause the loss of gradient information required for distributed aggregation.
	To overcome this issue, a mass-preserving gradient-tracking method is designed, in which the gradient information carried by departing agents is redistributed to the remaining agents within the same coalition.
	The proposed method preserves the aggregate gradient information of the coalition objective over the intra-coalition communication network and achieves gradient tracking under dynamic population changes without imposing additional initialization conditions beyond those of conventional gradient-tracking methods.
	Based on the proposed gradient-tracking
	mechanism, a distributed Nash equilibrium seeking framework is developed,
	and its convergence is established under both uncoordinated fixed and diminishing
	stepsizes. 
	Furthermore, to alleviate the slowdown of convergence caused by classical diminishing stepsize schemes ($\frac{\eta_0}{t}$, where $\eta_0>0$ is a constant), an exponential-interval diminishing stepsize scheme is designed based on the agents' exploration intervals.
	Finally, a numerical simulation based on an open connecting control game
	demonstrate the effectiveness of the proposed information-tracking and Nash
	equilibrium seeking framework.
\end{abstract}

\begin{IEEEkeywords}
Distributed algorithm, mass-preserving, Nash equilibrium, open game.
\end{IEEEkeywords}

\section{Introduction}
\IEEEPARstart{D}{istributed} information processing over networked multi-agent
systems has received considerable attention for its ability to accomplish
global computational and control tasks using only local information and
neighbor-to-neighbor communication, with applications in smart grids
\cite{saad2012game,tao2026specified}, mobile sensor networks
\cite{goldenbaum2013harnessing,8048022}, and formation control
\cite{8636259,hu2024time}.
A fundamental challenge in such networked systems is that the information
required to accomplish a global task is distributed among multiple agents.
This challenge is particularly evident in networked decision-making
problems, distributed optimization and
learning, where each agent possesses only a local component of the global
objective and the global information must therefore be aggregated through
inter-agent communication.
The classical gradient-tracking method provides an effective mechanism for such distributed
information aggregation, enabling each agent to asymptotically recover the
global gradient from locally available information and neighbor-to-neighbor
communication \cite{nedic2017achieving}.

Multi-coalition noncooperative games provide a unified framework that
encompasses networked decision-making and distributed optimization, and thus
provide a natural setting for studying distributed information aggregation
\cite{ye2018nash}.
Specifically, agents are
organized into multiple coalitions, where agents within each coalition
cooperate to optimize a common aggregate objective, while different coalitions
compete with each other. 
In such games, agents cooperate within coalitions while
competing with other coalitions, making distributed information processing
essential for obtaining and aggregating the information required for
coalition-level decision making.
A key objective
in the game is to obtain the Nash equilibrium (NE), i.e., a decision profile
at which no coalition can improve its objective by unilaterally changing its
decision. 
Owing to its connection with distributed optimization and networked
decision making, distributed NE seeking for multi-coalition games has been
extensively studied under various information, constraint, and convergence
settings \cite{zhou2022distributed,11119825,meng2023linear,wang2024distributed,
	pang2023distributed,mo2026distributed,wang2025gradient,lei2026distributed}.

In \cite{zhou2022distributed}, a distributed algorithm was developed for
consistency-constrained multi-coalition games by combining consensus and
average gradient tracking. The framework was further extended to
edge-agreement-constrained two-coalition zero-sum games in \cite{11119825}.
Under partial-decision information, Meng et al. \cite{meng2023linear} considered
multi-coalition games with local convex constraints, while
\cite{wang2024distributed,pang2023distributed} employed coalition-decision
estimation and average gradient tracking to handle distributed information
requirements. 
More recently, dynamic decision-making settings have been
considered. 
Mo et al. \cite{mo2026distributed} investigated online
multi-coalition aggregative games with time-varying local constraints using
projected gradient and consensus tracking.
In addition, acceleration techniques such as Nesterov acceleration
and local memory have been exploited to improve the convergence rate of
distributed NE seeking algorithms \cite{wang2025gradient,lei2026distributed}.
Nevertheless, these studies generally assume a fixed set of participating
agents and a fixed population size.

Despite these advances in distributed NE seeking for multi-coalition games,
existing studies generally assume a fixed set of participating agents.
In practical engineering applications, however, the composition of a network
may change over time as agents enter or leave for various reasons.
For example, in power markets, power plants within a company may enter or
leave the market in response to changes in the company's operational
strategies. 
To accommodate dynamic participation, recent studies have
considered open distributed optimization and open noncooperative games
\cite{hsieh2021optimization,11312648,11108040,11157711,11345953,liu2025online,
	liu2026distributed,chen2025adaptive,xu2025dynamic}. 
However, to the best of our
knowledge, distributed NE seeking for open multi-coalition games has not yet
been investigated.

Compared with open $N$-player games and open distributed optimizations, open multi-coalition games involve not
only strategic interactions among coalitions but also cooperation among
agents within each coalition. 
Thus, agents within a coalition need to obtain
the gradient of the coalition objective from information distributed among
multiple coalition members. 
Since this aggregate gradient is not directly
available to any individual agent, existing distributed NE seeking algorithms
for multi-coalition games typically employ average gradient tracking to
aggregate the coalition-wide gradient information
\cite{zhou2022distributed,meng2023linear,wang2024distributed,
	pang2023distributed,mo2026distributed,lei2026distributed}.
However, the
departure of an agent may remove the gradient information carried by its
tracking state, thereby disrupting the information balance required for
accurate gradient tracking.

To address the gradient-tracking challenge in open systems, \cite{11312648}
developed an open distributed optimization algorithm with gradient tracking
that employs acknowledgment messages to detect local topology changes and
max-consensus iterations to disseminate departure information among the
remaining agents. 
In particular, additional max-consensus iterations are
performed between consecutive optimization rounds to identify and propagate
departure events, after which the gradient-tracking states of the remaining
agents are reset accordingly. 
While this mechanism enables gradient tracking
under dynamic agent participation, the additional detection and information
dissemination process introduces extra communication and computation overhead
between optimization updates, which may limit its applicability in scenarios
with frequent agent entries and departures. 

Motivated by the need to maintain gradient tracking under dynamic population
changes without introducing an additional event-detection and dissemination
phase, this paper develops a mass-preserving gradient-tracking mechanism for
open multi-coalition games. 
The main contributions are summarized as follows.
\begin{enumerate}
	\item
	Unlike existing studies, this paper investigates the distributed Nash equilibrium seeking problem for open multi-coalition noncooperative games, where agents are allowed to dynamically enter and leave the game.
	The entry and departure of an agent can only be observed by its neighboring agents. Moreover, since agents that are relevant to an individual agent’s cost function may enter or leave the game, the corresponding cost function may change dynamically, resulting in additional challenges for distributed Nash equilibrium seeking.
	
	\item
	A mass-preserving gradient-tracking mechanism is developed to maintain
	aggregate gradient information under agent departures.
	In particular, the
	gradient information associated with a departing agent is redistributed to
	the remaining agents within the same coalition, thereby preserving the
	aggregate gradient information of the coalition objective over the
	intra-coalition communication network. 
	It is proved that the proposed
	mechanism achieves gradient tracking under dynamic population changes
	without requiring any additional initialization conditions beyond those
	of classical gradient-tracking methods. 
	Based on this mechanism, a
	distributed Nash equilibrium seeking framework is developed, under which
	both constant and diminishing heterogeneous stepsizes are considered.
	For the constant-stepsize case, an explicit upper bound on the
	convergence error is established, whereas sufficient conditions for
	convergence are derived for the diminishing-stepsize case.
	
	\item
	To alleviate the convergence slowdown caused by the continuously
	decreasing stepsize in the classical diminishing scheme
	$\frac{\eta_0}{t}$, where $\eta_0>0$ is a constant, an
	exponential-interval diminishing stepsize scheme is developed based on
	the agents' exploration intervals. 
	Specifically, each agent's time axis
	is partitioned into a countable sequence of exploration intervals with
	increasing lengths, within which the stepsize remains unchanged and is
	reduced only when the interval changes. 
	This design reduces the frequency
	of stepsize variations and mitigates the degradation of convergence speed
	in the later stage of the algorithm.
\end{enumerate}

\indent The following is the organization of this paper: Some preliminaries are presented in Section \uppercase\expandafter{\romannumeral2}. The problem formulation is presented in Section \uppercase\expandafter{\romannumeral3}. Our main results are given in Section \uppercase\expandafter{\romannumeral4}. The numerical simulation is presented in Section \uppercase\expandafter{\romannumeral5} and Section \uppercase\expandafter{\romannumeral6} concludes the paper.


\section{Preliminaries}
\subsection{Notations}
In this paper, $\mathbb{R}$, $\mathbb{R}^{n}$ and $\mathbb{R}^{n\times m}$ denote the real number, the column vector set and the $n\times m$ real matrix set, respectively.
For a set $S$, $| S|$ represents its cardinal number.
The projection onto the closed and convex set $S$ is denoted by $\mathrm{Proj}_{S}(\cdot)$.
Throughout this paper, $\Pi_{i=1}^n S_i$ denotes the product of $S_i$ when $S_i$ is a scalar, whereas it denotes the Cartesian product of $S_i$, $i=1,\ldots,n$, when each $S_i$ is a set.
For a matrix $A\in{\mathbb{R}^{n\times m}}$, let $[A]_i^j$ denote its entry located at the $i$-th row and $j$-th column, respectively. 
Let $\Vert\cdot\Vert_2$, $\lVert\cdot\rVert_F$ and $\left<\cdot,\cdot\right>$ denote the spectral norm, the \emph{Frobenius} norm and the \emph{Frobenius} inner product, respectively.
For any matrices $A,B\in{\mathbb{R}^{n\times m}}$, $\left<A,B\right>=\mathrm{tr}(A^\top B)$.
$A\le B$ means $[A]_i^j\le[B]_i^j$ and $A<B$ means $[A]_i^j<[B]_i^j$.
Let $\mathrm{blk}(A_1,\dots,A_n)$ denote the block diagonal matrix whose $i$-th diagonal block is matrix $A_i$, and $\textrm{col}(x_1,x_2,\cdots,x_n)$ denote the column stack of vectors $x_1,x_2,\cdots,x_n$.
$\mathcal{O}_n$, ${\mathbf 1}_n$, and $I$ denote the $n\times n$ matrix with all entries equal to $0$, the $n$-dimensional column vector with all entries equal to $1$, and the identity matrix of appropriate dimensions, respectively.

Let $\mathcal{G}=\{\mathcal{V},\mathcal{E},\mathcal{A}\}$ denote an undirected graph, where $\mathcal{V}$, $\mathcal{E}\subseteq \mathcal{V}\times \mathcal{V},\mathcal{A}\in{\mathbb{R}^{|\mathcal{V}|\times |\mathcal{V}|}}$ are the vertex set, the edge set and the adjacency matrix, respectively.
Let $\mathrm{col}\{x_{i}\}_{v_{i}\in{\mathcal{V}}}$ denote the stack column vector of column vectors $\{x_i\}_{v_i\in{\mathcal{V}}}$.
Let $\mathrm{blk}\{A_i\}_{v_i\in{\mathcal{V}}}$ denote the block diagonal matrix whose diagonal blocks consist of  $A_i$.
\section{Problem formulation}
	This paper considers the open multi-coalition game, where agents are allowed to dynamically join or leave the game.
Agents within the same coalition
cooperate to minimize the coalition's aggregate cost, while different
coalitions compete with each other.

\subsection{Open Multi-Coalition Game Model}
Let $\mathcal{C}_t=\{1,\dots,N_t\}$ denote the set of coalitions participating
in the game at time $t$, where $N_t$ is the number of participating
coalitions. The set of agents participating in coalition $i$ is denoted by
$\mathcal{V}_i^t$, and the overall agent set is given by
$
\mathcal{V}^t
=
\bigcup_{i\in\mathcal{C}_t}\mathcal{V}_i^t.
$
Let $n_{i,t}=|\mathcal{V}_i^t|$ denote the number of agents in coalition $i$
at time $t$. 
Accordingly, the total number of participating agents is
$
\widetilde n_t
=
\sum_{i=1}^{N_t} n_{i,t}.
$
The remaining and departing agents of coalition $i$ at time $t$ are,
respectively, denoted by
$
\mathcal{R}_i^t
=
\mathcal{V}_i^t\cap\mathcal{V}_i^{t-1},
\mathcal{D}_i^t
=
\mathcal{V}_i^{t-1}\setminus\mathcal{V}_i^t.
$
The entry and departure of agents satisfy the following rules:
\begin{enumerate}
	\item \textbf{Agent entry:}
	Each entering agent is regarded as a new agent and assigned a unique
	identifier.
	
	\item \textbf{Agent departure:}
	No departing agent leaves simultaneously with all of its intra-coalition
	neighbors.
\end{enumerate}
\begin{remark}
	In this paper,  we consider a class of dynamic systems in which agent entry is unrestricted, whereas agent departure is subject to certain constraints. Such a system structure helps prevent the loss of critical information when agents leave the system. This is analogous to a company in which, rather than allowing all employees of a department to leave simultaneously, some experienced employees are retained to ensure the continuity of the department's operations.
	It is worth noting that the network structure considered in \cite{liu2026distributed}, which has a fixed set of nodes but a dynamically changing agent composition, provides a representative example of such a system.
\end{remark}

Without loss of generality, let $t_\iota$, $\iota=1,2,\dots$, denote the
$\iota$-th switching instant at which the membership of the game changes due
to the entry or departure of an agent. 
The time axis is therefore divided
into consecutive intervals
$
[t_\iota,t_{\iota+1}).
$
For any
$t\in[t_\iota,t_{\iota+1})$, no agent enters or leaves the game. Hence,
$\mathcal{C}_t$ and $\mathcal{V}_i^t$ remain unchanged within this interval.
Consequently, the feasible
sets and cost functions associated with the current game configuration also
remain unchanged during $[t_\iota,t_{\iota+1})$.

For $t_\iota\le t<t_{\iota+1}$, let
$x_{i,j}(t)\in\Omega_{i,j}\subseteq\mathbb{R}$ denote the state of agent
$v_{i,j}$, where $\Omega_{i,j}$ is a nonempty, compact, and convex set. The
collective state of coalition $i$ is given by
$
x_i(t)
=
\operatorname{col}\{x_{i,j}(t)\}_{v_{i,j}\in\mathcal{V}_i^{t_\iota}}
\in\Omega_i^{t_\iota},
$
where
$
\Omega_i^{t_\iota}
=
\prod_{j=1}^{n_{i,t_\iota}}\Omega_{i,j}.
$
Let
$
x_{-i}(t)
=
\operatorname{col}\{x_k(t)\}_{k\in
	\mathcal{C}_{t_\iota}\setminus\{i\}}
$
denote the collective state of all coalitions except coalition $i$, and let
$
x(t)
=
\operatorname{col}\{x_k(t)\}_{k\in\mathcal{C}_{t_\iota}}
\in\Omega^{t_\iota},
$
where
$
\Omega^{t_\iota}
=
\prod_{i\in\mathcal{C}_{t_\iota}}\Omega_i^{t_\iota}.
$

In general, the cost of each agent may depend on the states of multiple
agents. To characterize such state-dependent interactions, let
$\mathcal{U}_{i,j}^{t_\iota}$ denote the set of payoff-relevant agents
associated with $v_{i,j}$, namely, the agents whose states directly affect
the cost of $v_{i,j}$ during $[t_\iota,t_{\iota+1})$ (see Example 1). 
Accordingly, the local
cost function of agent $v_{i,j}$ is denoted by
$f_{i,j}^{t_\iota}(\mathbf{x}_{i,j})$, where
$
\mathbf{x}_{i,j}
=
\operatorname{col}
\{x_{h,l}\}_{v_{h,l}\in\mathcal{U}_{i,j}^{t_\iota}}.
$
The aggregate cost of coalition $i$ is defined as
$
f_i^{t_\iota}(x_i,x_{-i})
=
\sum_{v_{i,j}\in\mathcal{V}_i^{t_\iota}}
f_{i,j}^{t_\iota}(\mathbf{x}_{i,j}).
$
The objective of coalition $i$ is therefore
\begin{equation}
	\min_{x_i\in\Omega_i^{t_\iota}}
	\quad
	f_i^{t_\iota}(x_i,x_{-i}).
	\label{106-1}
\end{equation}
\begin{example}
	For agent $v_{1,1}$, if its cost function is $f_{1,1}^{t_1}(x_{1,1},x_{1,3},x_{2,4})=x_{1,1}^2+x_{1,3}x_{2,4}$ at $t_1$, then $\mathcal{U}_{1,1}^{t_1}=\{v_{1,1},v_{1,3},v_{2,4}\}$.
\end{example}

To solve \eqref{106-1}, each agent needs both the state
information of the payoff-relevant agents and the gradient information of
its coalition objective. Accordingly, agent $v_{i,j}$ maintains two
estimators
\[
y_{i,j}(t)
=
\operatorname{col}
\{y_{i,j}^{h,l}(t)\}_{v_{h,l}\in\mathcal{U}_{i,j}^{t_\iota}}
\]
and
\[
s_{i,j}(t)
=
\operatorname{col}
\{s_{i,j}^{l}(t)\}_{v_{i,l}\in
	(\mathcal{N}_{i}^{t_\iota}\cup\mathcal{V}_i^{t_\iota})
	\cap\mathcal{U}_{i,j}^{t_\iota}},
\]
where $y_{i,j}(t)$ estimates the states contained in
$\mathbf{x}_{i,j}$ and $s_{i,j}(t)$ estimates the gradient information
required for solving \eqref{106-1}. In particular,
$s_{i,j}(t)$ is used to track the coalition-level gradient
$
\nabla_i f_i^{t_\iota}(x_i,x_{-i}).
$

For a given switching instant $t_\iota$, the strategy profile
$
x_{t_\iota}^*
=
(x_{i,t_\iota}^*,x_{-i,t_\iota}^*)
\in\Omega^{t_\iota}
$
is called a Nash equilibrium (NE) if
\[
f_i^{t_\iota}
(x_{i,t_\iota}^*,x_{-i,t_\iota}^*)
\le
f_i^{t_\iota}
(x_i,x_{-i,t_\iota}^*),
\qquad
\forall x_i\in\Omega_i^{t_\iota},
\quad
\forall i\in\mathcal{C}_{t_\iota}.
\]
The objective of the proposed algorithm is to enable each participating
agent to obtain the current NE through distributed information exchange.
\subsection{Communication Mechanism}
To support the distributed information processing required by the
multi-coalition game, two communication networks are considered. 
As illustrated in Fig. 1, a schematic representation of the network structure is provided.
\begin{figure}
	\centering
	\includegraphics[height=0.6\linewidth,width=1\linewidth]{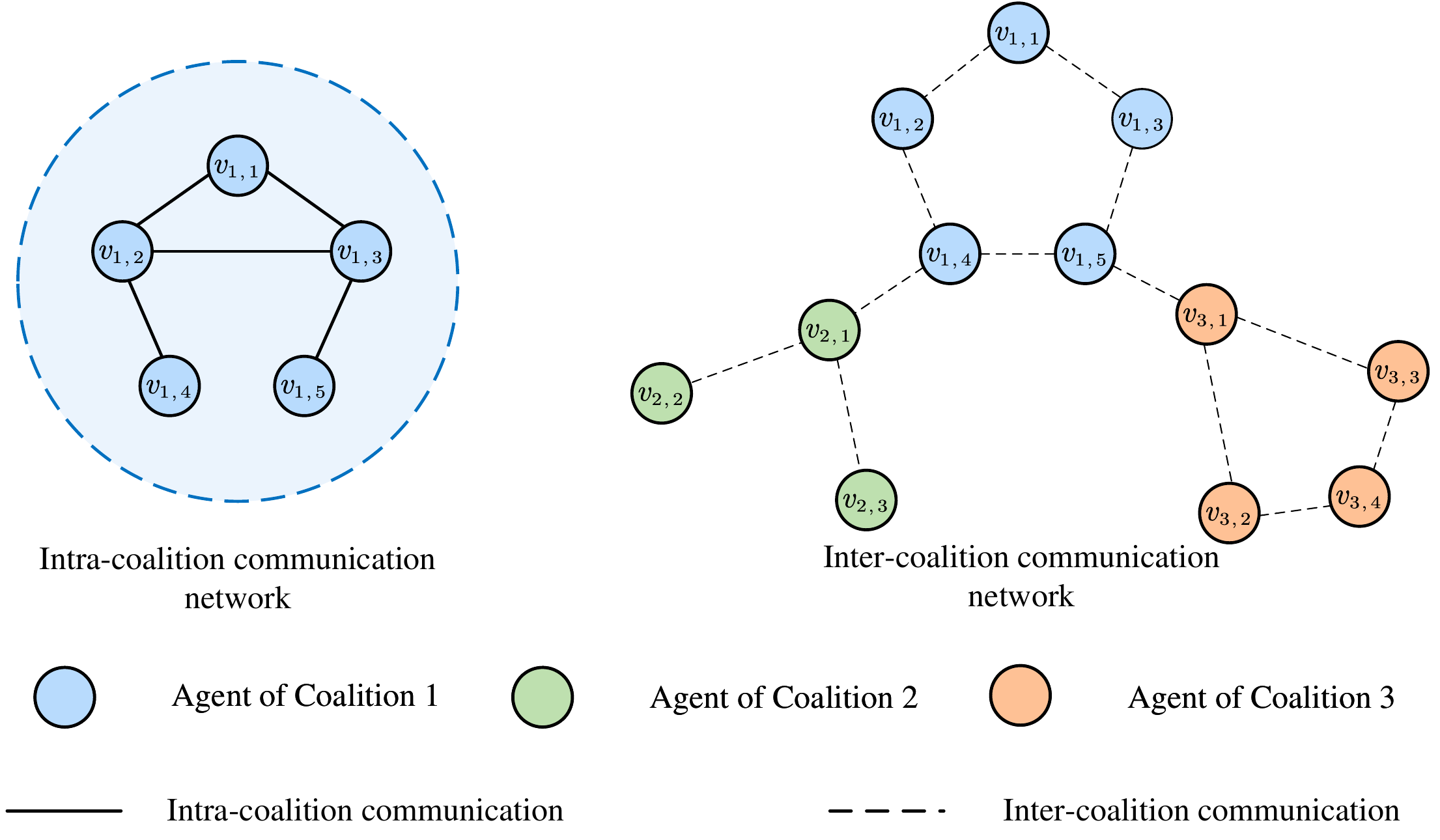}
	\captionsetup{font=footnotesize}
	\caption{An illustration of the intra-networks and inter-network in a $3$-coalition game.}
\end{figure}
The
intra-coalition communication network of coalition $i$ is described by the
undirected graph
$
\mathcal{G}_i^{t_\iota}
=
(\mathcal{V}_i^{t_\iota},
\mathcal{E}_i^{t_\iota},
\mathcal{A}_i^{t_\iota}),
$
which consists exclusively of agents belonging to coalition $i$. The
inter-coalition communication network is described by
$
\mathcal{G}^{t_\iota}
=
(\mathcal{V}^{t_\iota},
\mathcal{E}^{t_\iota},
\mathcal{A}^{t_\iota}),
$
which characterizes the communication relationships among all participating
agents. The corresponding neighbor sets are defined as
\[
\mathcal{N}_{v_{i,j}}^{t_\iota}
=
\left\{
v_{h,l}\mid
(v_{i,j},v_{h,l})\in\mathcal{E}^{t_\iota}
\right\}
\]
and
\[
\mathcal{N}_{v_{i,j}}^{i,t_\iota}
=
\left\{
v_{i,l}\mid
(v_{i,j},v_{i,l})\in\mathcal{E}_{i}^{t_\iota}
\right\}.
\]
The intra-coalition network is used to aggregate coalition-level gradient
information, whereas the inter-coalition network enables the exchange of
state information required to evaluate coupled cost functions.

During
$[t_\iota,t_{\iota+1})$, the corresponding communication networks remain
unchanged. 
When an agent leaves the game, its gradient-tracking information
is no longer directly available to the remaining agents. 
To preserve the
aggregate gradient information of the coalition, each departing agent
$v_{i,j}$ transfers the information associated with its loss mass to its
remaining intra-coalition neighbors at the switching instant. Denote the
loss mass associated with the departure of $v_{i,j}$ by
$Mass_{i,j}$.
This mass is redistributed among the remaining
intra-coalition neighbors and incorporated into their gradient-tracking
states. 
Consequently, the aggregate gradient information carried by the
intra-coalition network can be preserved despite the change in coalition
membership.

\section{Distributed algorithm}
In this section, we present the mass-preserving gradient-tracking based Nash seeking algorithm for the open multi-coalition games.
The proposed algorithm consists of three layers: an initialization layer, a state-update layer, and a gradient-tracking layer. 
The initialization layer is activated when an agent joins the game and is responsible for initializing the state and gradient estimators.
The state-update layer is responsible for driving the agents' states toward the NE, while the gradient-estimation layer is designed to estimate and track the gradient of the aggregate cost function associated with each coalition.

Next, we present the overall structure and main components of the proposed algorithm.

Let $
q_{i,j}^{t_\iota}(t)=\nabla_i f_{i,j}^{t_\iota}(y_{i,j}(t)),
$
and denote its $m$-th component by $q_{i,j}^{m,t_\iota}(t)=[q_{i,j}^{t_\iota}(t)]_m.$
Without loss of generality, suppose that the agents in $\mathcal{V}^{t_\iota}$ are indexed according to a certain predetermined order. 
Let $\tau_{i,j}$ denote the index of agent $v_{i,j}$ under this ordering.

\textbf{Initialization layer:} For all $v_{i,j}$, let the initialization time $T_{i,j}$ represent the moment when $v_{i,j}$ enters the game.
At $T_{i,j}$, arbitrarily select a value from $\mathbb{R}^{|\mathcal{U}_{i,j}^{T_{i,j}}\cup\mathcal{N}_{i,j}^{T_{i,j}}|}$ as the initial state estimator of $v_{i,j}$, i.e., $y_{i,j}(T_{i,j})\in{\mathbb{R}^{|\mathcal{U}_{i,j}^{T_{i,j}}\cup\mathcal{N}_{i,j}^{T_{i,j}}|}}$. 
Furthermore, let $s_{i,j}(T_{i,j})=q_{i,j}^{T_{i,j}}(T_{i,j})$ and $q_{i,j}^{m,T_{i,j}-1}(T_{i,j})=q_{i,j}^{m,T_{i,j}}(T_{i,j})$.

\textbf{State update layer:}
Without loss of generality, assume that agent $v_{i,j}$ has access to at least one estimate $y_{h,l}^{k,m}$ on the state of agent $v_{k,m}$. 
The relationship between $v_{k,m}$ and $v_{i,j}$ can be classified into the following three cases:
$$
\text{(i) } k=i,\ m=j;
\qquad
\text{(ii) } k=i,\ m\neq j;
\qquad
\text{(iii) } k\neq i.
$$
Specifically, these three cases correspond to the situation where $v_{k,m}$ is $v_{i,j}$ itself, another agent within the same coalition, and an agent belonging to a different coalition, respectively.
Then, the state estimate $y_{i,j}^{k,m}$ is updated according to the following rules for $t\in[t_\iota,t_{\iota+1}-1]$.

\noindent\textbf{Case 1:} 
$v_{k,m}\in\mathcal{N}_{i,j}^{t_\iota}\cup\mathcal{U}_{i,j}^{t_\iota}$,
or $v_{i,j}$ maintains the state estimate $y_{i,j}^{k,m}(t)$.

If $k=i$ and $m=j$.
\begin{flalign}\label{9-1}
	&y_{i,j}^{i,j}(t+1)&\notag\\
	&=\mathrm{Proj}_{\Omega_{i,j}}\Bigg(
	y_{i,j}^{i,j}(t)
	+\varepsilon
	\sum_{v_{h,l}\in\mathcal{N}_{i,j}^{t_\iota}}
	[\mathcal{A}]_{\tau_{i,j}}^{\tau_{h,l}}
	\big(y_{h,l}^{i,j}(t)-y_{i,j}^{i,j}(t)\big)\notag\\
	&-\eta_{i,j}s_{i,j}^{j}(t)
	\Bigg).
\end{flalign}

If $k=i$ and $m\neq j$.
\begin{flalign}\label{103-1}
	&y_{i,j}^{i,m}(t+1)\notag\\
	&=y_{i,j}^{i,m}(t)
	+\varepsilon
	\sum_{v_{h,l}\in\mathcal{N}_{i,j}^{t_\iota}}
	[\mathcal{A}]_{\tau_{i,j}}^{\tau_{h,l}}
	\big(y_{h,l}^{i,m}(t)-y_{i,j}^{i,m}(t)\big)
	-\eta_{i,j}s_{i,j}^{m}(t).
\end{flalign}

If $k\neq i$.
\begin{flalign}\label{9-2}
	&y_{i,j}^{k,m}(t+1)=y_{i,j}^{k,m}(t)
	+\varepsilon
	\sum_{v_{h,l}\in\mathcal{N}_{i,j}^{t_\iota}}
	[\mathcal{A}]_{\tau_{i,j}}^{\tau_{h,l}}
	\big(y_{h,l}^{k,m}(t)-y_{i,j}^{k,m}(t)\big).
\end{flalign}

\noindent\textbf{Case 2:}
$v_{k,m}\notin\mathcal{N}_{i,j}^{t_\iota}\cup\mathcal{U}_{i,j}^{t_\iota}$,
and $v_{i,j}$ does not maintain the state estimate $y_{i,j}^{k,m}(t)$.

If $k=i$
\begin{flalign}\label{105-1}
	&y_{i,j}^{k,m}(t+1)=y_{i,j}^{k,m}(t)
	+\varepsilon
	\sum_{v_{h,l}\in\mathcal{N}_{i,j}^{t_\iota}}
	[\mathcal{A}]_{\tau_{i,j}}^{\tau_{h,l}}
	y_{h,l}^{k,m}(t)-\eta_{i,j}s_{i,j}^m(t).&
\end{flalign}
\indent If $k\neq i$
\begin{flalign}\label{1-2}
	&y_{i,j}^{k,m}(t+1)=y_{i,j}^{k,m}(t)
	+\varepsilon
	\sum_{v_{h,l}\in\mathcal{N}_{i,j}^{t_\iota}}
	[\mathcal{A}]_{\tau_{i,j}}^{\tau_{h,l}}
	y_{h,l}^{k,m}(t).
\end{flalign}
It is important to note that in the state update layer, if agent $v_{h,l}$ does not maintain $y_{h,l}^{k,m}(t)$, then $y_{h,l}^{k,m}(t)$ is set to zero when performing the corresponding computations.



\begin{remark}
	It can be observed from the state-estimation update rules that the dimension of the state estimate maintained by each agent varies dynamically over time. Through information propagation over the communication network, each agent can progressively establish estimates of the states of all agents in the network.
\end{remark}

Let $Mass_{i,j}=\frac{1}{|\mathcal{N}_{i,j}^{i,t_{\iota-1}}\cap\mathcal{V}_{i}^{t_\iota}|}\big(s_{i,j}(t_\iota)-q_{i,j}^{t_{\iota-1}}(t_\iota)\big)$ for $v_{i,j}\in{\mathcal{V}_{i}^{t_{\iota-1}}}$.  




\textbf{Gradient estimation layer:} 
Before performing the iterative updates of the algorithm, the following operation is first executed.
Specifically, if $v_{i,j}$ receives $Mass_{i,l}$, then
\begin{flalign}\label{2-1}
	s_{i,j}(t_\iota)\leftarrow s_{i,j}(t_\iota)+\sum\limits_{v_{i,l}\in{\mathcal{N}_{i,j}^{i,t_{\iota-1}}}\cap\mathcal{D}_{t_\iota}}Mass_{i,l},
\end{flalign}
where ``$\leftarrow$'' represents the assignment operation.
Then, for $t_\iota\le t\le t_{\iota+1}-1$, the gradient estimate of $v_{i,j}$ is updated according to the following rules.

 If $v_{k,m}\in{(\mathcal{N}_{i,j}^{i,t_{\iota}}\cup\mathcal{V}_i^{t_{\iota}}\cap\mathcal{U}_{i,j}^{t_{\iota}})}$, or $v_{i,j}$ already maintains the corresponding gradient estimate $s_{i,j}^m(t)$, 
\begin{equation}
	s_{i,j}^m(t+1) = 
			\begin{cases}
			s_{i,j}^m(t)+\varepsilon\sum\limits_{v_{i,l}\in{\mathcal{N}_{i,j}^{i,t_\iota}}}[\mathcal{A}^i]_j^l(s_{i,l}^m(t)-s_{i,j}^m(t))\\
			+q_{i,j}^{m,t_\iota}(t+1)-q_{i,j}^{m,t_{\iota-1}}(t), \quad t=t_\iota, \\
			s_{i,j}^m(t)+\varepsilon\sum\limits_{v_{i,l}\in{\mathcal{N}_{i,j}^{i,t_{\iota}}}}[\mathcal{A}^i]_j^l(s_{i,l}^m(t)-s_{i,j}^m(t))\\
			+q_{i,j}^{m,t_{\iota}}(t+1)-q_{i,j}^{m,t_{\iota}}(t), \quad t_\iota<t\le t_{\iota+1}-1. 
		\end{cases}
	\label{5-1}
\end{equation}

 If $v_{k,m}\notin{(\mathcal{N}_{i,j}^{i,t_{\iota}}\cup\mathcal{V}_i^{t_{\iota}}\cap\mathcal{U}_{i,j}^{t_{\iota}})}$, and $v_{i,j}$ does not maintain $s_{i,j}^{m}(t)$,
\begin{flalign}\label{1-3}
	&s_{i,j}^m(t+1)=\varepsilon\sum\limits_{v_{i,l}\in{\mathcal{N}_{i,j}^{i,t_{\iota}}}}[\mathcal{A}_{t_\iota}^i]_j^ls_{i,l}^m(t).
\end{flalign}

Similarly, in the gradient estimation layer, if agent $v_{i,l}$ does not maintain $s_{i,l}^{m}(t)$, then $s_{i,l}^{m}(t)$ is set to zero when performing the corresponding computations.

For comparison, the open gradient-tracking method in \cite{11312648}
employs acknowledgment messages to detect topology changes and a distributed
max-consensus procedure to disseminate departure information among the
remaining agents. 
A departure event subsequently triggers a coordinated reset
of the gradient-tracking states.
The proposed mass-preserving
mechanism preserves the gradient information associated with a departing agent
by transferring its information to the remaining intra-coalition neighbors,
without requiring an additional event-detection and information-dissemination
procedure between consecutive optimization updates. 
This provides a more
direct information-preserving mechanism for gradient tracking under dynamic
agent participation.

The pseudo code of the proposed algorithm is given in \textbf{Algorithm 1}.
\begin{algorithm}[!htp]
	\caption{Mass-preserving gradient-tracking-based Nash equilibrium seeking algorithm}
	\begin{algorithmic}[1]
		\State \textbf{Convention:} An unmaintained state estimates are treated as zero.
		\Statex \textbf{Initialization layer:}
		
		\For{each agent $v_{i,j}$}
		\State Set $T_{i,j}$ as the initialization time when $v_{i,j}$ enters the game.
		\State Initialize $y_{i,j}(T_{i,j})$ arbitrarily.
		\State Initialize $s_{i,j}(T_{i,j})$ and $q_{i,j}^{m,T_{i,j}-1}(T_{i,j})$ as specified above.
		\EndFor
		
\Statex \textbf{State update layer:}
\For{any $t_\iota$}
\For{$t=t_\iota,\ldots,t_{\iota+1}-1$}
\For{each pair $(i,j)$ and $(k,m)$}
\If{$v_{k,m}\in\mathcal{N}_{i,j}^{t_\iota}\cup\mathcal{U}_{i,j}^{t_\iota}$
	\textbf{or} $y_{i,j}^{k,m}(t)$ is maintained}

\If{$k=i$ and $m=j$}
\State Update $y_{i,j}^{k,m}(t+1)$ according to \eqref{9-1}

\ElsIf{$k=i$ and $m\neq j$}
\State Update $y_{i,j}^{k,m}(t+1)$ according to \eqref{103-1}

\Else
\State Update $y_{i,j}^{k,m}(t+1)$ according to \eqref{9-2}
\EndIf

\ElsIf {$k=i$}
\State Update $y_{i,j}^{k,m}(t+1)$ according to \eqref{105-1}

\Else {$k\neq i$}
\State Update $y_{i,j}^{k,m}(t+1)$ according to \eqref{1-2}

\EndIf
\EndFor
\EndFor
\EndFor

		\Statex \textbf{Gradient estimation layer:}
		\If{agent $v_{i,j}$ receives the quantity $Mass_{i,l}$}
		\State Update $s_{i,j}(t_\iota)$ according to \eqref{2-1}.
		\EndIf
		\Statex
		\For {any $t_\iota$} 
		\For{$t=t_\iota,\ldots,t_{\iota+1}-1$}
		\For{each relevant agent $v_{k,m}$}
		\If{$v_{k,m}\in
			\mathcal{N}_{i,j}^{i,t_\iota}
			\cup
			(\mathcal{V}_i^{t_\iota}\cap\mathcal{U}_{i,j}^{t_\iota})$
			\textbf{or} $v_{i,j}$ maintains $s_{i,j}^{m}(t)$}
		\If{$t=t_\iota$}
		\State Update $s_{i,j}^{m}(t+1)$ according to the first case of \eqref{5-1}.
		\Else
		\State Update $s_{i,j}^{m}(t+1)$ according to the second case of \eqref{5-1}.
		\EndIf
		\Else
		\State Update $s_{i,j}^{m}(t+1)$ according to \eqref{1-3}.
		\EndIf
		\EndFor
		\EndFor
		\EndFor
		
	\end{algorithmic}
\end{algorithm}

\begin{remark}
It can be observed from the above algorithm that an agent may continue to maintain and transmit estimates associated with agents that have already left the game. 
Although this may incur additional computational and memory overhead, it does not affect the NE seeking performance of the proposed algorithm. 
In this work, we focus primarily on the NE seeking performance of the proposed algorithm. 
Since retaining the estimates associated with agents that have left the game does not affect such performance, we do not introduce an additional mechanism to remove these estimates. 
But, the removal of outdated estimates will be considered in future work.
\end{remark}
\section{Convergence analysis}
This section establishes the convergence properties of the proposed distributed
Nash equilibrium seeking algorithm under dynamic agent participation. 
The
analysis is developed in two stages. 
First, we show that the proposed
mass-preserving mechanism maintains the aggregate gradient information of each
coalition despite agent entries and departures. 
Next, based on
this result, we analyze the convergence of the distributed Nash equilibrium
seeking dynamics under both uncoordinated fixed stepsizes and diminishing stepsizes.

For the subsequent analysis, we impose the following conditions. 
\begin{assumption}
	For a given constant $t_\iota$, the functions and its gradients satisfy the following conditions, respectively.
	\begin{enumerate}[]
		\item $f_{i,j}^{t_\iota}(\mathbf{x}_{i,j})$ is continuously differentiable;
		\item $\Vert \nabla_i f_{i,j}^{t_\iota}(\mathbf{x}_{i,j})-\nabla_i f_{i,j}^{t_\iota}(\mathbf{y}_{i,j})\Vert_F\le \beta_{i,j,{t_\iota}}\Vert \mathbf{x}_{i,j}-\mathbf{y}_{i,j}\Vert_F$, for any $\mathbf{x}_{i,j},\mathbf{y}_{i,j}\in{\boldsymbol{\Pi}_{v_{i,j}\in{\mathcal{U}_{i,j}^{t_\iota}}}\Omega_{i,j}}$, where $\beta_{i,j,{t_\iota}}$ is a positive constant;
		\item $\left<g^{t_\iota}(x)-g^{t_\iota}(y),x-y\right>\ge\alpha_{t_\iota}\Vert x-y\Vert_F^2$, for $x,y\in{\Omega^{t_\iota}}$, where $\alpha_{t_\iota}$ is a positive constant, $g^{t_\iota}(x)=\mathrm{col}\{g_i^{t_\iota}(\mathbf{x}_{i,j})\}^\top_{i\in{\mathcal{C}_{t_\iota}}}$ and $g_i^{t_\iota}(\mathbf{x}_{i,j})=\nabla_if_{i}^{t_\iota}(\mathbf{x}_{i,j})$.
	\end{enumerate}
\end{assumption}
\begin{assumption}
	$\mathcal{G}_{t_\iota},\mathcal{G}_{1,t_\iota},\dots, \mathcal{G}_{N_{t_\iota},t_\iota}$ are connected and without self-loops, for $\iota=1,2,\dots$.
\end{assumption}
\begin{assumption}
	There is a positive constant $n_{\text{tal}}$ such that $\widetilde{n}_{t_\iota}\le n_{\text{tal}}$, for all $t_\iota$.
\end{assumption}
\begin{remark}
	In real-world applications, a platform or environment can accommodate only a finite number of agents at any given time due to practical limitations such as limited physical space, communication channels, and computational or communication resources. 
	Assumption 3 captures this practical constraint by imposing an upper bound on the number of agents that can simultaneously participate in the game.	
\end{remark}

We first establish the following theorem to characterize the effectiveness of the proposed gradient-tracking mechanism.
\begin{theorem}
	If $s_{i,j}(T_{i,j})=q^{t_\iota}_{i,j}(T_{i,j})$ for arbitrary $v_{i,j}$, then $\sum_{v_{i,j}\in{\mathcal{V}_i^{t_\iota}}}s_{i,j}(t)=\sum_{v_{i,j}\in{\mathcal{V}_i^{t_\iota}}}q_{i,j}^{t_\iota}(t)$ for $\iota=1,2\dots,$ and $t_\iota<t\le t_\iota-1$.
\end{theorem}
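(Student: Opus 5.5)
We prove the conservation identity by induction over the switching instants $t_\iota$, and within each interval $[t_\iota,t_{\iota+1})$ by induction on $t$. The time range in the statement is evidently meant to be $t_\iota< t\le t_{\iota+1}$ (equivalently, all $t$ in the interval, including the post-redistribution value at $t_\iota$). We argue componentwise: fix coalition $i$ and a component index $m$. Write
\[
S_i^m(t)=\sum_{v_{i,j}\in\mathcal{V}_i^{t_\iota}}s_{i,j}^m(t),\qquad Q_i^{m,t_\iota}(t)=\sum_{v_{i,j}\in\mathcal{V}_i^{t_\iota}}q_{i,j}^{m,t_\iota}(t).
\]
Unmaintained components are treated as zero, as in the algorithm's convention.

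\textbf{Step 1 (invariant at each switching instant).} We show that, after the assignment \eqref{2-1},
\[
S_i^m(t_\iota)=\sum_{v_{i,j}\in\mathcal{V}_i^{t_\iota}}q_{i,j}^{m,t_{\iota-1}}(t_\iota),
\]
where newly entered agents contribute $q_{i,j}^{m,T_{i,j}-1}(T_{i,j})=q_{i,j}^{m,T_{i,j}}(T_{i,j})=s_{i,j}^m(T_{i,j})$ by the initialization layer. The induction hypothesis from the previous interval gives
\[
\sum_{v\in\mathcal{V}_i^{t_{\iota-1}}}s_{i,j}^m(t_\iota)=\sum_{v\in\mathcal{V}_i^{t_{\iota-1}}}q_{i,j}^{m,t_{\iota-1}}(t_\iota),
\]
i.e., it holds at the endpoint $t=t_\iota$ of the previous interval, before redistribution. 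Split $\mathcal{V}_i^{t_{\iota-1}}=\mathcal{R}_i^t\cup\mathcal{D}_i^t$. Each departing agent $v_{i,l}$ sends $Mass_{i,l}$ to each of its $|\mathcal{N}_{i,l}^{i,t_{\iota-1}}\cap\mathcal{V}_i^{t_\iota}|$ remaining neighbors. This set is nonempty by the departure rule, so $Mass_{i,l}$ is well defined. The total mass added to the remaining agents is therefore exactly $s_{i,l}^m(t_\iota)-q_{i,l}^{m,t_{\iota-1}}(t_\iota)$. Summing over $\mathcal{D}_i^t$ cancels the departing agents' discrepancy, so the remaining agents carry $\sum_{\mathcal{R}_i^t}q^{m,t_{\iota-1}}_{i,j}(t_\iota)$. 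Adding the entering agents (whose $s=q$ at entry) gives the claimed invariant over $\mathcal{V}_i^{t_\iota}$. For the base case $\iota=1$ all agents are newly initialized, so the invariant holds trivially.

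\textbf{Step 2 (preservation within an interval).} Sum \eqref{5-1} over $v_{i,j}\in\mathcal{V}_i^{t_\iota}$. Because $\mathcal{G}_i^{t_\iota}$ is undirected, $\mathcal{A}^i$ is symmetric, so
\[
\sum_j\sum_l[\mathcal{A}^i]_j^l\big(s_{i,l}^m-s_{i,j}^m\big)=0.
\]
This yields the telescoping relation
\[
S_i^m(t+1)-Q_i^{m,t_\iota}(t+1)=S_i^m(t)-\sum_j q_{i,j}^{m,\cdot}(t),
\]
where the subtracted term at $t=t_\iota$ is the $q^{t_{\iota-1}}$ sum and for $t>t_\iota$ is $Q_i^{m,t_\iota}(t)$. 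Combining this with Step 1 gives $S_i^m(t_\iota+1)=Q_i^{m,t_\iota}(t_\iota+1)$, and induction on $t$ propagates the equality up to $t_{\iota+1}$. This also closes the outer induction.

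\textbf{Main obstacle.} The delicate point is the bookkeeping for components updated by \eqref{1-3} rather than \eqref{5-1}, together with the zero-padding convention. These components have no $q$-increment, and we must check that their sum is consistent with the consensus-sum identity: \eqref{1-3} lacks the self term $s_{i,j}^m(t)$, so this requires that such components be zero at time $t$. We handle this by showing that component $m$ is zero for every agent that does not maintain it, and that $q_{i,j}^m\equiv0$ whenever $v_{i,m}\notin\mathcal{U}_{i,j}^{t_\iota}$. Under that observation the component-$m$ sum reduces to the maintained agents and the argument of Step 2 applies. If this fails, we would restrict the theorem's identity to components in $\mathcal{V}_i^{t_\iota}\cap\mathcal{U}$, where all agents use \eqref{5-1}.
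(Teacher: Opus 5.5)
Your proposal is correct and follows essentially the same route as the paper's proof. The paper likewise zero-pads unmaintained components so that \eqref{1-3} coincides with the tracking update \eqref{5-2}, and shows that the redistributed masses cancel the departing agents' discrepancies, so that after \eqref{2-1} the coalition sum equals $\sum_{v_{i,j}\in\mathcal{V}_i^{t_\iota}}q_{i,j}^{t_{\iota-1}}(t_\iota)$, with entrants handled via $q_{i,j}^{T_{i,j}-1}(T_{i,j})=q_{i,j}^{T_{i,j}}(T_{i,j})$; it then sums the update over the coalition so the consensus terms cancel and the identity propagates through the interval. The only slip is your parenthetical claiming the identity holds at $t_\iota$ after redistribution: there it holds with $q^{t_{\iota-1}}$ rather than $q^{t_\iota}$, exactly as your own Step 1 states, so the correct range is $t_\iota<t\le t_{\iota+1}$.
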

\begin{proof}
	\indent The proof is presented in Appendix.
\end{proof}

\begin{remark}
	In a static multi-coalition game, according to the standard gradient-tracking algorithms in \cite{11119825,liu2026distributed1,yu2023distributed,liu2024distributed,11157711}, if the gradient estimators are initialized as $s_{i,j}(0)=q_{i,j}(0)$, then it follows that
	$$
	\sum_{v_{i,j}\in\mathcal{V}_i}s_{i,j}(t)
	=
	\sum_{v_{i,j}\in\mathcal{V}_i}q_{i,j}(t),
	\qquad \forall t.
	$$
	This property ensures that the gradient estimator maintained by each agent can track the gradient of the aggregate cost function of its corresponding coalition. 
	However, in an open system, the departure of agents may disrupt this equality, since the gradient information associated with a departing agent may no longer be preserved in the intra-networks.
\end{remark}
According to Theorem 1, if the gradient estimator of each entering agent
$v_{i,j}$ is initialized at its entry time $T_{i,j}$ according to
$
s_{i,j}(T_{i,j})=q_{i,j}^{t_\iota}(T_{i,j}),
$ then the proposed gradient-tracking mechanism preserves the aggregate gradient
information of each coalition and achieves gradient tracking in the considered
open system. 
Notably, the initialization conditions required by Theorem 1 are
essentially consistent with those of classical gradient-tracking methods,
despite the dynamic changes in the participating agent set.
Moreover, when the number of coalitions in the open multi-coalition game is
fixed at one, the considered problem reduces to an open distributed
optimization problem. Therefore, the proposed mass-preserving
gradient-tracking mechanism can be directly applied to distributed
optimization problems with dynamically changing agent populations.

Let $Y_i(t)=\mathrm{col}\{y^\top_{i,j}(t)\}_{v_{i,j}\in{\mathcal{V}_i^{t_\iota}}}\in{\mathbb{R}^{n_{i,t_\iota}\times \widetilde{n}_{t_\iota}}}$ and $Y(t)=\mathrm{col}\{Y_i(t)\}_{i\in{\mathcal{C}_{t_\iota}}}\in{\mathbb{R}^{\widetilde{n}_{t_\iota}\times \widetilde{n}_{t_\iota}}}$.
Let $S_i(t)=\mathrm{col}\{s^\top_{i,j}(t)\}_{v_{i,j}\in{\mathcal{V}_i^{t_\iota}}}\in{\mathbb{R}^{n_{i,t_\iota}\times n_{i,t_\iota}}}$ and $S(t)=\mathrm{blk}\{S_i(t)\}_{i\in{\mathcal{C}_{t_\iota}}}\in{\mathbb{R}^{\widetilde{n}_{t_\iota}\times \widetilde{n}_{t_\iota}}}$.
Let $Q_{i,t_\iota}(t)=\mathrm{col}\{(q^{t_\iota}_{i,j}(t))^\top\}_{v_{i,j}\in{\mathcal{V}_i^{t_\iota}}}\in{\mathbb{R}^{n_{i,t_\iota}\times n_{i,t_\iota}}}$ and $Q_{t_\iota}(t)=\mathrm{blk}\{Q_{i,t_\iota}(t)\}_{i\in{\mathcal{C}_{t_\iota}}}\in{\mathbb{R}^{\widetilde{n}_{t_\iota}\times \widetilde{n}_{t_\iota}}}$.
Let $\kappa_{t_\iota}=t_{\iota+1}-t_\iota$ denote the total number of time points between $t_{\iota+1}$ and $t_\iota$.
Denote $\eta_{\text{max}}=max_{v_{i,j}\in{\mathcal{V}^{t_\iota}}}\{\eta_{i,j}\}$, and 
$\eta_{\text{min}}=\min_{v_{i,j}\in{\mathcal{V}_i^{t_\iota}}}\{\eta_{i,j}\}$.
Furthermore, denote $\eta_{\text{inf}}=\min_{t_\iota}\min_{v_{i,j}\in{\mathcal{V}_i^{t_\iota}}}\{\eta_{i,j}\}$ and $\eta_{\text{sup}}=\max_{t_\iota}\max_{v_{i,j}\in{\mathcal{V}_i^{t_\iota}}}\{\eta_{i,j}\}$.
Then, if Assumption 3 holds, $\eta_{\text{inf}}$ and $\eta_{\text{sup}}$ are fixed constants.
	Moreover, since $\Omega_{i,j}$ is compact, there exists a constant $B_x>0$ such that
$
\Vert x_{i,j}(t)\Vert_F\le B_x,
$
and consequently,
$
\Vert x(t)\Vert_F\le \sqrt{\widetilde{n}_{t\iota}}B_x,$ for $t_\iota\le t<t_{\iota+1}.$
Furthermore, if Assumption 1 holds, then it follows that
$
\Vert Q_{t_\iota}\Vert_F\le 2\beta_{t_\iota} \widetilde{n}_{t_\iota}B_x$ for $t_\iota\le t<t_{\iota+1},$
where $\beta_{t_\iota}=\max_{i\in{\mathcal{C}_{t_\iota}}}\{\beta_{i,t_\iota}\}$ and $\beta_{i,t_\iota}=\max_{v_{i,j}\in{\mathcal{V}_i^{t_\iota}}}\{\beta_{i,j,t_\iota}\}.$

Define $e(t)\coloneq(
	\Vert Y(t)-\mathbf{1}_{\widetilde{n}_{t_\iota}}x_{t_\iota}^*\Vert_F	,
	\Vert S(t)-\bar{S}(t)\Vert_F)^\top$, where $\bar{S}=\mathrm{blk}\{\frac{\mathbf{1}_{n_{i,t_{iota}}}\mathbf{1}_{n_{i,t_{iota}}}^{\top}\bar{S}_i(t)}{n_{i,t_{iota}}}\}_{i\in{\mathcal{C}_{t_\iota}}}$.
Then, we can obtain the following theorem.

\begin{theorem}
	Suppose Assumption 1 holds.
	If $\kappa_{t_\iota}>1$ and $\varepsilon<\frac{1}{n_{\text{tal}}}$, then, 
	\begin{align*}
		\Vert e(t_{\iota+1})\Vert_F\le\Vert \hat{\Gamma}_{t_\iota+1}^{\kappa_{t_\iota}-1}\Vert_F(\Vert e(t_\iota)\Vert_F+4\beta_{t_\iota} \widetilde{n}_{t_\iota}B_x),
	\end{align*}
	where $\hat{\Gamma}_{t_\iota}$ is a matrix defined as $\eqref{8-1}$.
	Furthermore, If Assumptions 2 and 3 also hold, $\varepsilon<\frac{1}{n_{\text{tal}}}$ and $\eta_{\mathrm{max}}$ satisfies \eqref{6-10}, then $\forall t_\iota$, $\rho(\hat{\Gamma}_{t_\iota})<1$, that is, the algorithm exhibits exponential convergence.

	  \begin{align}\label{6-10}
		\eta_{\text{max}}<\min&\left\{\frac{\sqrt{n_{tal}}}{2\alpha_{t_\iota}},\frac{\eta_{\text{inf}}\alpha_{t_\iota}(1-\sigma_{t_\iota})}{\beta_{t_\iota}^2\eta_{\text{sup}}\sqrt{\widetilde{n}_{t_\iota}}},\sqrt{\frac{1-\sigma_{t_\iota}^2}{2\beta_{t_\iota}^2}},\right.&\notag\\
		&\quad \frac{(1-\hat{\sigma}_{t_\iota})(1-\sqrt{\rho(\Gamma_{t_\iota})})}{4\beta_{t_\iota}},\notag\\ &\quad\left.\frac{(1-\sigma_{t_\iota}^2)\eta_{\text{inf}}\alpha_{t_\iota}}{2\beta_{t_\iota}\big((1+\sigma_{t_\iota})\beta_{t_\iota}\eta_{\text{sup}}\sqrt{\widetilde{n}_{t_\iota}}+\sigma_{t_\iota}\eta_{\text{inf}}\alpha_{t_\iota}\big)}\right\},
	\end{align}
	where $\sigma_{t_\iota}=\Vert I-\phi_{t_\iota}\Vert_2$, $\phi_{t_\iota}=\frac{\mathbf{1}_{\widetilde{n}_{t_\iota}}\mathbf{1}_{\widetilde{n}_{t_\iota}}^\top}{\widetilde{n}_{t_\iota}}$, $\beta_{t_\iota}=\max_{i\in{\mathcal{C}_{t_\iota}}}\{\beta_{i,t_\iota}\}$ and $\beta_{i,t_\iota}=\max_{v_{i,j}\in{\mathcal{V}_i^{t_\iota}}}\{\beta_{i,j,t_\iota}\}.$
\end{theorem}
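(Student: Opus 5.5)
Within a fixed interval $[t_\iota,t_{\iota+1})$ the population, graphs, cost functions and NE $x_{t_\iota}^*$ are all frozen. The proof therefore has two parts: a one-step linear inequality system for the two error components in $e(t)$, and a spectral-radius argument for the resulting $2\times 2$ comparison matrix.

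\textbf{Compact form.} Write \eqref{9-1}--\eqref{9-2} as
\[
Y(t+1)=\mathcal{P}\big(W_{t_\iota}Y(t)-\mathrm{blk}\{\eta_{i,j}\}S(t)\big),
\]
where $W_{t_\iota}=I-\varepsilon L_{t_\iota}$ and $\mathcal{P}$ projects only the diagonal (self-state) entries. Write \eqref{5-1} as
\[
S(t+1)=W^{\mathrm{blk}}_{t_\iota}S(t)+Q_{t_\iota}(t+1)-Q_{t_\iota}(t),
\]
with $W^{\mathrm{blk}}_{t_\iota}=\mathrm{blk}\{I-\varepsilon L_{i,t_\iota}\}$. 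The condition $\varepsilon<1/n_{\text{tal}}$ together with Assumption 3 makes each of these matrices doubly stochastic and nonnegative. Assumption 2 then gives the contraction factors
\[
\sigma_{t_\iota}=\Vert W-\phi\Vert_2<1,\qquad \hat\sigma_{t_\iota}<1
\]
on the disagreement subspaces.

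\textbf{Exceptional first step.} The step $t=t_\iota$ is special because \eqref{5-1} uses $q^{t_{\iota-1}}$ and $s$ carries the redistributed mass. Here I bound
\[
\Vert Q_{t_\iota}(t_\iota+1)-Q_{t_{\iota-1}}(t_\iota)\Vert_F\le 4\beta_{t_\iota}\widetilde n_{t_\iota}B_x,
\]
using the two bounds $\Vert Q\Vert_F\le 2\beta\widetilde n B_x$. This is exactly where the additive term $4\beta_{t_\iota}\widetilde n_{t_\iota}B_x$ comes from. Theorem 1 ensures the column sums of $S_i$ equal those of $Q_{i,t_\iota}$ from $t_\iota+1$ onward, so $\bar S(t)$ equals the block averages of the true coalition gradients.

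\textbf{Regular steps.} For $t_\iota<t<t_{\iota+1}$ I derive the two one-step bounds separately.

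For the $Y$ component, I use the fixed-point property
\[
x_{t_\iota}^*=\mathrm{Proj}_\Omega\big(x_{t_\iota}^*-\eta\, g^{t_\iota}(x_{t_\iota}^*)\big),
\]
together with nonexpansiveness of projection. I split
\[
\eta S=\eta\bar S+\eta(S-\bar S),
\]
and use strong monotonicity (Assumption 1(iii)) with Lipschitz continuity to get contraction of the consensus-plus-gradient part. The uncoordinated stepsizes are handled by the ratio $\eta_{\text{inf}}/\eta_{\text{sup}}$, which appears in the $\alpha$-terms. The result is
\[
\Vert Y(t+1)-\mathbf{1}x^*\Vert_F\le \gamma_{11}\Vert Y(t)-\mathbf{1}x^*\Vert_F+\eta_{\max}\Vert S-\bar S\Vert_F,
\]
with $\gamma_{11}=\sqrt{1-2\eta_{\min}\alpha_{t_\iota}+\eta_{\max}^2\beta_{t_\iota}^2\widetilde n_{t_\iota}}\,\sigma$-type, or something similar, where the terms of $Y$ off consensus are absorbed through $\sigma_{t_\iota}$.

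For the $S$ component I use $\hat\sigma_{t_\iota}$ together with
\[
\Vert Q(t+1)-Q(t)\Vert_F\le\beta_{t_\iota}\Vert Y(t+1)-Y(t)\Vert_F .
\]
The increment $Y(t+1)-Y(t)$ is bounded in terms of $e(t)$, which gives
\[
\Vert S(t+1)-\bar S(t+1)\Vert_F\le \gamma_{21}\Vert Y(t)-\mathbf 1x^*\Vert_F+(\hat\sigma+\beta\eta_{\max}c)\Vert S(t)-\bar S(t)\Vert_F .
\]
Stacking the two bounds gives $e(t+1)\le \hat\Gamma_{t_\iota}e(t)$ entrywise with a nonnegative matrix $\hat\Gamma_{t_\iota}$ as in \eqref{8-1}. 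Iterating from $t_\iota+1$ to $t_{\iota+1}$, which is $\kappa_{t_\iota}-1$ steps (hence $\kappa_{t_\iota}>1$), and prepending the exceptional step yields the first claim.

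\textbf{Spectral radius.} For the second claim I want $\rho(\hat\Gamma_{t_\iota})<1$ for the nonnegative $2\times2$ matrix. It suffices that both diagonal entries are below $1$ and that
\[
(1-\gamma_{11})(1-\gamma_{22})>\gamma_{12}\gamma_{21}.
\]
Each term in \eqref{6-10} is meant to enforce one of these inequalities:
\begin{itemize}
\item the first three keep $\gamma_{11}<1$, i.e.\ the square-root quantity real and small;
\item the fourth forces $\gamma_{22}\le(1+\hat\sigma)/2$ relative to $\sqrt{\rho(\Gamma_{t_\iota})}$, the unperturbed matrix;
\item the last gives the determinant condition.
\end{itemize}

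\textbf{Main obstacle.} I expect the hardest step to be the $Y$-bound. The projection acts only on diagonal entries, and heterogeneous stepsizes break the clean fixed-point identity. I would first handle this by comparing with the stepsize-weighted fixed point and bounding the mismatch by $(\eta_{\text{sup}}-\eta_{\text{inf}})$, absorbing it through the second term of \eqref{6-10}.
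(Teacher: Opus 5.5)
Your skeleton matches the paper's: the compact form \eqref{6-7}, nonexpansiveness of the projection, the split $S=\bar S+(S-\bar S)$, the $\hat\sigma_{t_\iota}$ bound on the tracking error with the $\beta_{t_\iota}\Vert Y(t+1)-Y(t)\Vert_F$ term, the $4\beta_{t_\iota}\widetilde n_{t_\iota}B_x$ first step, and $\kappa_{t_\iota}-1$ regular steps. However, the $Y$-bound you flag as the main obstacle has no working argument, and that bound is where the matrices in the statement come from. The missing idea is the orthogonal decomposition $Y(t)-\mathbf 1x^*_{t_\iota}=(\phi_{t_\iota}Y(t)-\mathbf 1x^*_{t_\iota})+Y_\perp(t)$. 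Strong monotonicity can only be applied to the consensus part, after adding and subtracting $G_{t_\iota}(\bar y(t))$. The mixing matrix only contracts $Y_\perp$, through $\sigma_{t_\iota}$. The cross terms are controlled by Lipschitz continuity. Expanding the square then gives $(a,b)\Gamma_{t_\iota}(a,b)^\top\le\rho(\Gamma_{t_\iota})(a^2+b^2)$, with $a=\Vert\phi_{t_\iota}Y-\mathbf 1x^*_{t_\iota}\Vert_F$, $b=\Vert Y_\perp\Vert_F$ and the explicit symmetric $\Gamma_{t_\iota}$ after \eqref{6-6}, so $\gamma_{11}=\sqrt{\rho(\Gamma_{t_\iota})}$. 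A product-type factor like the one you guess does not arise, because the two contraction mechanisms act on different components and their interaction must be tracked by a $2\times2$ form. Without $\Gamma_{t_\iota}$ you obtain neither the specific $\hat\Gamma_{t_\iota}$ of \eqref{8-1} that the statement is about, nor a way to read \eqref{6-10}, which contains $\rho(\Gamma_{t_\iota})$. This is also why your assignment of the terms of \eqref{6-10} is off. The first, second, third and fifth terms are conditions on the entries of $\Gamma_{t_\iota}$ (aimed at $\rho(\Gamma_{t_\iota})<1$). The fourth term alone handles $\hat\Gamma_{t_\iota}$: with $r=\sqrt{\rho(\Gamma_{t_\iota})}<1$ it gives $\hat\sigma_{t_\iota}+\eta_{\text{max}}\beta_{t_\iota}<1$ and $(1-r)(1-\hat\sigma_{t_\iota}-\eta_{\text{max}}\beta_{t_\iota})-\eta_{\text{max}}\beta_{t_\iota}(1+r)=(1-r)(1-\hat\sigma_{t_\iota})-2\eta_{\text{max}}\beta_{t_\iota}>0$. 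So the determinant condition for $\hat\Gamma_{t_\iota}$ comes from the fourth term, not the last. Your $2\times 2$ M-matrix criterion is itself fine and is equivalent to the Horn--Johnson corollary the paper cites.

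Your remedy for heterogeneous stepsizes targets the wrong place and would not close the argument. Heterogeneity does not break the fixed-point identity. Each row $\tau_{i,j}$ of the $Y$-update uses a single scalar $\eta_{i,j}$, and the first-order condition at the NE gives $x_i^*=\mathrm{Proj}_{\Omega_i^{t_\iota}}(x_i^*-c\,g_i^{t_\iota}(x^*_{t_\iota}))$ for every $c>0$. Hence $\mathbf 1x^*_{t_\iota}=\boldsymbol{\mathrm{Proj}}_{\Omega_{t_\iota}}(\mathbf 1x^*_{t_\iota}-\boldsymbol\eta G_{t_\iota}(x^*_{t_\iota}))$ holds exactly, which is what \eqref{5-3} uses. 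Bounding a fixed-point ``mismatch'' by $\eta_{\text{sup}}-\eta_{\text{inf}}$ would instead produce an additive term of order $(\eta_{\text{sup}}-\eta_{\text{inf}})\Vert G_{t_\iota}(x^*_{t_\iota})\Vert_F$ that does not vanish with $e(t)$. The recursion would become affine, no stepsize condition like \eqref{6-10} could absorb the term, and neither the stated bound nor convergence to zero would follow. Heterogeneity matters only in the cross term $\langle\phi_{t_\iota}Y-\mathbf 1x^*_{t_\iota},\boldsymbol\eta(G_{t_\iota}(\bar y)-G_{t_\iota}(x^*_{t_\iota}))\rangle$, which the paper lower-bounds using $\eta_{\text{min}}$. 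Splitting $\boldsymbol\eta=\eta_{\text{min}}I+(\boldsymbol\eta-\eta_{\text{min}}I)$ there costs only a multiple of $(\eta_{\text{max}}-\eta_{\text{min}})\beta_{t_\iota}a^2$, which can be absorbed. Separately, your compact form, which projects only the self-entries, genuinely loses the fixed point whenever $x^*_{t_\iota}$ lies on the boundary of the feasible set. The unprojected entries $y_{i,j}^{i,m}$ with $m\neq j$ then carry the constant defect $\eta_{i,j}[g_i^{t_\iota}(x^*_{t_\iota})]_m/n_{i,t_\iota}$, again not of order $\eta_{\text{sup}}-\eta_{\text{inf}}$. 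The paper's \eqref{6-7} avoids this by projecting the whole coalition-$i$ block of row $\tau_{i,j}$ onto $\Omega_i^{t_\iota}$ (the sets $\hat\Omega_{i,t_\iota}$). You need that form, or an interior equilibrium, for the argument to go through.
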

\begin{proof}
	\indent The proof is presented in Appendix.
\end{proof}
The convergence upper bound
\[
\left\|\hat{\Gamma}_{t_{\iota+1}}^{\kappa_{t_\iota}-1}\right\|_F
\left(
\|e(t_\iota)\|_F
+
4\beta_{t_\iota}\widetilde{n}_{t_\iota}B_x
\right),
\]
where $\hat{\Gamma}_{t_{\iota+1}}$ is independent of
$\widetilde{n}_{t_\iota}$, presents the effect of the current population
size on the convergence error.
Furthermore, Theorem 2 shows that, under a
suitably selected fixed stepsize, the proposed algorithm achieves exponential
convergence. 
In particular, as the length of the interval between two
consecutive structural events increases,
\[
\lim_{\kappa_{t_\iota}\rightarrow\infty}
\|e(t_{\iota+1})\|_F=0.
\]
However, in a distributed network, it is generally difficult for agents to
locally determine a suitable range of fixed stepsizes through neighbor-to-
neighbor interactions. 
This motivates the use of diminishing stepsizes as an
alternative when a suitable fixed stepsize cannot be selected.

Accordingly, for each $v_{i,j}\in\mathcal{V}^{t_\iota}$, the stepsize
$\eta_{i,j}(t)$ is assumed to satisfy the following condition.
\begin{assumption}
	$\{\eta_{i,j}(t)\}$ is non-increasing, $\sum_{t=0}^{\infty}\eta_{i,j}(t)=\infty$ and $\sum_{t=0}^{\infty}\eta_{i,j}^2(t)<\infty$ for all $v_{i,j}$.
\end{assumption}
Let $E(t)\coloneqq \Vert Y(t)-\mathbf{1}_{\widetilde{n}_{t_\iota}}x_{t_\iota}^* \Vert_F$ denote the state error with respect to the Nash equilibrium $x_{t_\iota}^*$.
Then, the following corollary follows directly from Theorem 2.
\begin{corollary}
	Suppose Assumption 1 holds.
	If $\kappa_{t_\iota}>1$ and $\varepsilon<\frac{1}{n_{\text{tal}}}$, then, 
			\begin{align*}
		\Vert E(t_{\iota+1})\Vert_F\le  \Pi_{t=t_\iota+1}^{t_{\iota+1}-1} \Vert \hat{\Gamma}_{t}\Vert_F(\Vert e(t_\iota)\Vert_F+4\beta_{t_\iota} \widetilde{n}_{t_\iota}B_x).
	\end{align*}
\end{corollary}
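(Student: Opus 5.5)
The plan is to read Corollary 1 off the one-step recursion that underlies Theorem 2. We do not use the closed-form power bound $\Vert\hat{\Gamma}_{t_{\iota+1}}^{\kappa_{t_\iota}-1}\Vert_F$ directly.

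\emph{Step 1 (one-step recursion).} In the proof of Theorem 2, on the interval $[t_\iota,t_{\iota+1})$ the error vector satisfies, componentwise, a linear inequality of the form $e(t+1)\le \hat{\Gamma}_{t+1}e(t)$ for $t_\iota<t\le t_{\iota+1}-1$. For the first step $t=t_\iota$ there is an additional inhomogeneous term, caused by the switch from $q^{t_{\iota-1}}_{i,j}$ to $q^{t_\iota}_{i,j}$ in the first case of \eqref{5-1} and by the mass redistribution \eqref{2-1}. Theorem 1 guarantees that the coalition sums of $s_{i,j}$ and $q^{t_\iota}_{i,j}$ still coincide after the switch. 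Using $\Vert Q_{t_\iota}\Vert_F\le 2\beta_{t_\iota}\widetilde n_{t_\iota}B_x$ for both the old and new gradient stacks, this term is bounded by $4\beta_{t_\iota}\widetilde n_{t_\iota}B_x$. This yields $\Vert e(t_\iota+1)\Vert_F\le \Vert\hat{\Gamma}_{t_\iota+1}\Vert_F(\Vert e(t_\iota)\Vert_F+4\beta_{t_\iota}\widetilde n_{t_\iota}B_x)$, or a bound of the same shape with the first factor absorbed. The hypothesis $\kappa_{t_\iota}>1$ guarantees that at least one homogeneous step remains inside the interval. The hypothesis $\varepsilon<1/n_{\text{tal}}$ makes $I-\varepsilon L$ a doubly stochastic nonnegative matrix, so the consensus factors entering $\hat\Gamma$ are well defined with norm at most one.

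\emph{Step 2 (time-varying stepsizes).} Under Assumption 4 the stepsizes $\eta_{i,j}(t)$ vary with $t$, so the matrix in the recursion is $\hat\Gamma_t$, meaning the matrix \eqref{8-1} with $\eta_{\max}$ replaced by $\eta_{\max}(t)$ and with the corresponding $\eta_{\inf},\eta_{\sup}$. The derivation of Theorem 2 goes through verbatim with these replacements, since it never used constancy of $\eta$ within the interval, only its value at the current step.

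\emph{Step 3 (iterate).} We unroll the recursion from $t_\iota+1$ to $t_{\iota+1}-1$. Submultiplicativity of the Frobenius norm gives $\Vert \hat\Gamma_{t_{\iota+1}-1}\cdots\hat\Gamma_{t_\iota+1}\Vert_F\le \Pi_{t=t_\iota+1}^{t_{\iota+1}-1}\Vert\hat\Gamma_t\Vert_F$. Since nonnegative matrices preserve componentwise inequalities, the vector inequality can be chained before taking norms. Finally, $E(t)$ is the first component of $e(t)$, so $\Vert E(t_{\iota+1})\Vert_F\le\Vert e(t_{\iota+1})\Vert_F$, which gives the claim.

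The main obstacle is Step 2. We must verify that each bound in the Theorem 2 proof, in particular the contraction of the projected gradient step under Assumption 1(3) and the coupling term between $Y$ and $S$ driven by $q(t+1)-q(t)$, depends only on the current stepsize $\eta(t)$ and not on $\eta(t+1)$. If the $S$-update introduces $\eta(t+1)$ through $Y(t+1)$, we would bound it by $\eta(t)$ using monotonicity from Assumption 4 and absorb it into $\hat\Gamma_{t+1}$. We would also need $\hat\Gamma_t$ to be entrywise nonnegative for the componentwise chaining; if that fails, we would instead chain the scalar norm inequalities directly, which still yields the product bound.
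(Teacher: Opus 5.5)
Your route is the paper's route. The paper's proof of Corollary~1 is the single sentence ``it follows directly from Theorem~2'', and its natural reading is exactly what you do: rerun the one-step recursion from the proof of Theorem~2 with $\eta_{\max}(t),\eta_{\min}(t)$ in place of the fixed stepsizes, then unroll it with Frobenius submultiplicativity. Your Step~2 worry resolves in your favor. $Y(t+1)$ is produced with $\eta(t)$ only, so \eqref{6-8}--\eqref{6-9} give $c(t+1)\le(\hat\sigma_{t_\iota}+\beta_{t_\iota}\eta_{\max}(t))c(t)+\beta_{t_\iota}(\sqrt{\rho(\Gamma_t)}+1)E(t)$. This means $e(t+1)\le\hat\Gamma_t e(t)$ with index $t$, not $t+1$ as in your Step~1, and no monotonicity of the stepsizes is needed. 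The homogeneous steps $t=t_\iota+1,\dots,t_{\iota+1}-1$ then supply exactly the $\kappa_{t_\iota}-1$ factors in the statement.

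That is why the switching step is where the gap lies, and your proposal does not close it. All $\kappa_{t_\iota}-1$ factors are already used, so you need $\Vert e(t_\iota+1)\Vert_F\le\Vert e(t_\iota)\Vert_F+4\beta_{t_\iota}\widetilde n_{t_\iota}B_x$ with no multiplicative factor in front. Your first alternative, with $\Vert\hat\Gamma_{t_\iota+1}\Vert_F$ in front, gives one factor too many, and the ``absorbed'' alternative is asserted rather than derived.

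What \eqref{105-3} and \eqref{25-1} actually give is $e(t_\iota+1)\le M e(t_\iota)+(0,4\beta_{t_\iota}\widetilde n_{t_\iota}B_x)^\top$ with $M=\begin{bmatrix}\sqrt{\rho(\Gamma_{t_\iota})}&\eta_{\max}(t_\iota)\\0&\hat\sigma_{t_\iota}\end{bmatrix}$. From this you only get $\Vert e(t_\iota+1)\Vert_F\le\Vert M\Vert_2\Vert e(t_\iota)\Vert_F+4\beta_{t_\iota}\widetilde n_{t_\iota}B_x$. The bound $\Vert M\Vert_2\le 1$ does not follow from Assumption~1 and $\varepsilon<1/n_{\text{tal}}$. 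Already $\Vert M\Vert_2\ge\eta_{\max}(t_\iota)$. Also, the $(1,1)$ entry of $\Gamma_{t_\iota}$, and hence $\rho(\Gamma_{t_\iota})$, exceeds $1$ once $\eta_{\max}^2\beta_{t_\iota}^2>2\alpha_{t_\iota}\eta_{\min}/\sqrt{n_{\text{tal}}}$.

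Your appeal to Theorem~1 does not rescue this step. At $t=t_\iota$ itself, its proof gives $\sum_j s_{i,j}(t_\iota)=\sum_j q^{t_{\iota-1}}_{i,j}(t_\iota)$, i.e.\ the \emph{old} cost functions; equality with $q^{t_\iota}$ holds only from $t_\iota+1$ on. So the estimate $\Vert\bar S(t_\iota)-G_{t_\iota}(x^*_{t_\iota})\Vert_F\le\beta_{t_\iota}E(t_\iota)$ behind \eqref{105-3} need not hold, and the $E$-component of the first step carries its own inhomogeneous term.

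The paper's proof of Theorem~2 makes the same jump in its last display, so ``follows from Theorem~2'' inherits it. To complete your argument you need one of two things. Either impose a stepsize restriction at $t_\iota$ that guarantees $\Vert M\Vert_2\le1$, together with a bound on the old/new gradient mismatch. Or settle for a weaker conclusion with an extra factor $\max\{1,\Vert M\Vert_2\}$ and an enlarged additive constant.
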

\begin{proof}
	Then, it follows directly from Theorem 2 that Corollary 1 holds.
\end{proof}
Corollary 1 provides an upper bound on the convergence error of the proposed
algorithm over each inter-event interval under diminishing stepsizes.
If Assumptions 2, 3 hold, $\varepsilon<\frac{1}{n_{\text{tal}}}$, and $\kappa_{t_\iota}$ is sufficiently large, then according to Theorem 2, this upper bound will continuously shrink.
However, since
$
\lim_{t_{\iota+1}\rightarrow\infty}
\rho(\hat{\Gamma}_{t_{\iota+1}})=1,
$ the upper bound is not necessarily guaranteed to converge to zero.
Therefore, Corollary 1 only provides a relatively loose convergence bound and
does not by itself establish the asymptotic convergence of the proposed
algorithm.

To further characterize the asymptotic convergence of the proposed algorithm
under diminishing stepsizes, we establish the following theorem.
\begin{theorem}
	Suppose Assumptions 1-4 hold.
	If $\varepsilon<\frac{1}{n_{\text{tal}}}$, then 
	$$\lim\limits_{\kappa_{t_\iota}\rightarrow\infty}E(t_{\iota+1})=0.$$
\end{theorem}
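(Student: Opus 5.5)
Within a fixed inter-event interval $[t_\iota,t_{\iota+1})$ the membership, the graphs $\mathcal{G}^{t_\iota},\mathcal{G}_i^{t_\iota}$, the cost functions and the NE $x_{t_\iota}^*$ do not change. The statement is therefore a single-interval result: as $\kappa_{t_\iota}\to\infty$, the dynamics reduce to the static diminishing-stepsize algorithm started from the bounded configuration $(Y(t_\iota),S(t_\iota))$. I will run the standard three-part diminishing-stepsize argument (consensus error, tracking error, optimality gap) on that interval and close it with a Robbins--Siegmund/Polyak-type lemma.

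\textbf{Step 1 (mean of the trackers).} By Theorem 1, after the mass redistribution \eqref{2-1} we have $\sum_{v_{i,j}\in\mathcal{V}_i^{t_\iota}}s_{i,j}(t)=\sum_{v_{i,j}\in\mathcal{V}_i^{t_\iota}}q_{i,j}^{t_\iota}(t)$ for all $t$ in the interval. Hence the block average $\bar S(t)$ equals the block average of $Q_{t_\iota}(t)$, which is the coalition gradient evaluated at the estimates, divided by $n_{i,t_\iota}$.

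\textbf{Step 2 (coupled contraction).} Under Assumption 2 and $\varepsilon<1/n_{\text{tal}}$, the weight matrix $W=I-\varepsilon L$ is doubly stochastic. Its disagreement contraction factor is $\sigma_{t_\iota}<1$, and Assumption 3 bounds it uniformly. Following the derivation of $\hat\Gamma$ in Theorem 2, but with $\eta_{\max}$ replaced by $\eta_{\max}(t)$, I obtain the recursion
\begin{align*}
\|S(t+1)-\bar S(t+1)\|_F&\le \sigma\|S(t)-\bar S(t)\|_F+c_1\|Y(t+1)-Y(t)\|_F,\\
\|Y(t)-\mathbf 1\bar y(t)\|_F&\le \sigma^{t-t_\iota}c_2+c_3\textstyle\sum_{r<t}\sigma^{t-1-r}\eta_{\max}(r)\|S(r)\|_F .
\end{align*}
Compactness of $\Omega_{i,j}$ and the bound $\|Q_{t_\iota}\|_F\le 2\beta_{t_\iota}\widetilde n_{t_\iota}B_x$ give boundedness of $S$, so $\|Y(t+1)-Y(t)\|_F=O(\eta_{\max}(t))+O(\text{consensus error})$. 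Using $\sum_t\eta^2<\infty$ together with the standard lemma that $\sum_t\eta(t)\sum_{r<t}\sigma^{t-r}\eta(r)<\infty$, I conclude
\[
\sum_t\eta_{\max}(t)\|Y(t)-\mathbf 1\bar y(t)\|_F<\infty,\qquad \sum_t\eta_{\max}(t)\|S(t)-\bar S(t)\|_F<\infty .
\]

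\textbf{Step 3 (optimality gap).} Let $D(t)=\|\bar x(t)-x^*_{t_\iota}\|^2$, where $\bar x$ collects the self-estimates $y_{i,j}^{i,j}$. Nonexpansiveness of $\mathrm{Proj}_{\Omega_{i,j}}$ in \eqref{9-1}, the fixed-point property $x^*=\mathrm{Proj}_\Omega(x^*-\eta g(x^*))$, and strong monotonicity (Assumption 1(3)) give
\[
D(t+1)\le D(t)-2\eta_{\min}(t)\alpha_{t_\iota}D(t)+\eta_{\max}(t)\,c\,\big(\|Y-\mathbf 1\bar y\|_F+\|S-\bar S\|_F\big)+c'\eta_{\max}^2(t).
\]
The main obstacle is here. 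The stepsizes are uncoordinated, so the cross term $(\eta_{\max}(t)-\eta_{\min}(t))\langle g,\cdot\rangle$ does not vanish and must be absorbed by the $\alpha$ term. I will first try to bound it using the ratio $\eta_{\text{sup}}/\eta_{\text{inf}}$, mirroring the second entry of \eqref{6-10}. If that fails, I will assume $\eta_{i,j}(t)=\theta_{i,j}\eta(t)$ and work in the weighted norm $\sum\theta_{i,j}^{-1}\|\cdot\|^2$.

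\textbf{Step 4 (conclusion).} All perturbation terms are summable by Step 2 and Assumption 4, so the Robbins--Siegmund lemma gives that $D(t)$ converges and $\sum_t\eta_{\min}(t)D(t)<\infty$. Since $\sum_t\eta_{\min}(t)=\infty$, the limit is $0$. Combining this with the consensus error tending to zero yields $E(t)\to 0$ along the interval, i.e., $\lim_{\kappa_{t_\iota}\to\infty}E(t_{\iota+1})=0$.
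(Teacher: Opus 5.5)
\textbf{Gap in Step 3.} The difficulty you flag (uncoordinated stepsizes) is not the main one; the decisive problem is structural. You measure optimality through the self-estimates $x_{i,j}=y_{i,j}^{i,j}$. But in \eqref{9-1} the consensus correction $\varepsilon\sum_{v_{h,l}\in\mathcal{N}_{i,j}^{t_\iota}}[\mathcal{A}]_{\tau_{i,j}}^{\tau_{h,l}}(y_{h,l}^{i,j}(t)-y_{i,j}^{i,j}(t))$ enters with no stepsize in front. After nonexpansiveness and the fixed-point property, expanding $|x_{i,j}(t+1)-x^*_{i,j}|^2$ therefore produces the cross term
\[
2\varepsilon\sum_{v_{i,j}}\big(x_{i,j}(t)-x^*_{i,j}\big)\sum_{v_{h,l}\in\mathcal{N}_{i,j}^{t_\iota}}[\mathcal{A}]_{\tau_{i,j}}^{\tau_{h,l}}\big(y_{h,l}^{i,j}(t)-y_{i,j}^{i,j}(t)\big)=O\big(\sqrt{D(t)}\,\Vert Y(t)-\mathbf{1}\bar y(t)\Vert_F\big),
\]
and not a term of size $O(\eta_{\max}(t)\Vert Y(t)-\mathbf{1}\bar y(t)\Vert_F)$. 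Put differently, nonexpansiveness bounds $|x_{i,j}(t+1)-x^*_{i,j}|$ through the neighbours' estimates $y_{h,l}^{i,j}(t)$, so $D$ does not close on itself.

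Step 2 only yields $\sum_t\eta_{\max}(t)\Vert Y(t)-\mathbf{1}\bar y(t)\Vert_F<\infty$, and the unweighted sum is in general infinite. Only coalition-$i$ rows take gradient steps on coalition-$i$ columns. Hence, whenever $G_{t_\iota}(x^*_{t_\iota})\neq 0$ (generically, when a constraint is active), the term $(I-\phi_{t_\iota})\boldsymbol{\eta}(t)S(t)$ keeps injecting disagreement of exact order $\eta(t)$. For $\eta(t)\sim 1/t$ your perturbation is then not summable, and Robbins--Siegmund does not apply.

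Passing to the average $\bar y$ removes the mixing term but not the difficulty. Only self-estimates are projected, so $\mathbf{1}^\top Y$ is not preserved. Both the average recursion and your linear convolution bound in Step 2 therefore pick up projection corrections $\mathrm{Proj}_{\Omega_{i,j}}(z)-z$ as large as the local disagreement plus $\eta_{i,j}(t)|s_{i,j}^j(t)|$, and you do not account for these. Likewise, boundedness of $S$ does not follow from compactness. The non-self estimates are never projected, so $Q_{t_\iota}(t)$ is evaluated off $\Omega^{t_\iota}$, and the bound $\Vert S(t)\Vert_F\le c(t)+\beta_{t_\iota}E(t)+B_G^{t_\iota}$ is circular without a joint argument.

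\textbf{How the paper's proof avoids this.} It never separates decisions from disagreement. It runs one recursion for $E(t)=\Vert Y(t)-\mathbf{1}_{\widetilde n_{t_\iota}}x^*_{t_\iota}\Vert_F$ over all estimates, applies nonexpansiveness to the stacked projection, and splits $P_{t_\iota}Y(t)-\mathbf{1}x^*_{t_\iota}$ into the orthogonal parts $\phi_{t_\iota}Y(t)-\mathbf{1}x^*_{t_\iota}$ and $P_{t_\iota}Y_\perp(t)$. The mixing step then contributes $\sigma_{t_\iota}^2b^2(t)$ with no cross term against $a(t)$. This gives an unweighted decrease $-\frac{1-\sigma_{t_\iota}^2}{4}b^2(t)$ that absorbs the consensus effects. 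The only remaining cross terms are $\eta$-weighted and are handled by Young's inequality, as in \eqref{105-4}. The tracking error enters through $W(t)=E^2(t)+\lambda c^2(t)$, with the bound on $\Vert S(t)\Vert_F$ used inside the recursion rather than as an a priori bound. What remains is only the summable forcing $D_8\eta_{\max}^2(t)$.

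Your Robbins--Siegmund closing step fits exactly this kind of recursion, and is cleaner than the paper's limsup argument. The repair is to replace your Steps 2--3 by such a joint Lyapunov recursion. On the stepsizes, the weighted-norm device with $\eta_{i,j}(t)=\theta_{i,j}\eta(t)$ is legitimate for coordinatewise projections, but it adds a hypothesis beyond Assumption 4. The paper instead carries $\eta_{\min}(t)$ and $\eta_{\max}(t)$ separately and requires $\eta_{\max}^2(t)\beta_{t_\iota}^2<\alpha_{t_\iota}\eta_{\min}(t)/(2\sqrt{\widetilde n_{t_\iota}})$ for large $t$.
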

\begin{proof}
	\indent The proof is presented in Appendix.
\end{proof}
According to Theorem 3, the algorithm converges as long as the stepsize sequence satisfies Assumption 4.
This also implies that, when a suitable fixed stepsize cannot be selected, a diminishing stepsize can be employed to achieve convergence of the proposed algorithm.

As we all know, there exist various stepsize sequences satisfying Assumption 4. 
Among them, the sequence $\{\eta_{i,j}(T_{i,j})/(t-T_{i,j}+1)\}$ is one of the most commonly considered choices in distributed optimization and NE seeking, where $a_0$ is a positive constant.
It is evident that, as time progresses, this kind stepsize gradually becomes smaller, which in turn slows down the state updates. 
This is particularly undesirable for agents that remain in the game throughout the entire process, as their stepsizes continue to decrease over time, resulting in increasingly slow state evolution.

Therefore, it is desirable to reduce the frequency of stepsize updates while preserving the convergence properties of the algorithm. To this end, we consider the following stepsize scheme.

First, for each agent, its time axis is partitioned into a countable sequence of consecutive intervals. 
Specifically, the $k$-th interval associated with agent $v_{i,j}$ is defined as
$
\left[\mathcal{T}_{i,j}(k),\mathcal{T}_{i,j}(k+1)\right),
$
where 
$
\mathcal{T}_{i,j}(k)=T_{i,j}+a^{k-1}-1,
$
and $a>1$ is integer.
Accordingly, the length of the $k$-th interval is given by
$
\mathcal{T}_{i,j}(k+1)-\mathcal{T}_{i,j}(k)=a^{k-1},
$
which increases exponentially with $k$.
We refer to $\left[\mathcal{T}{i,j}(k),\mathcal{T}{i,j}(k+1)\right)$ as the $k$-th exploration interval of agent $v_{i,j}$. During the $k$-th exploration interval, agent $v_{i,j}$ employs the following stepsize:
\begin{flalign}\label{7-1}
	\eta_{i,j}(t)=\frac{\eta_{i,j}(T_{i,j})}{a^{k-1}}, \quad t\in\left[\mathcal{T}_{i,j}(k),\mathcal{T}_{i,j}(k+1)\right),
\end{flalign}
where $\eta_{i,j}(T_{i,j})>0$ is a given constant. 
Accordingly, the stepsize is kept constant throughout each exploration interval and is reduced by half only when transitioning to the next interval. 
In this way, the frequency of stepsize updates decreases over time, while the stepsize gradually decreases as the exploration intervals become longer (see Table I).
%
\begin{table}[htbp] \centering 
	\caption{Comparison of the conventional and proposed stepsize schemes at different time instants ($a=2$)} \label{tab:stepsize_scheme} 
\begin{threeparttable}
	\begin{tabular}{c c c}
		\toprule
		Time instant $t$ & Stepsize $\frac{\eta_0}{t-T_{i,j}+1}$ & Proposed stepsize \eqref{7-1}\\
		\midrule
		$T_{i,j}$ & $\eta_0$ & $\eta_0$ \\
		$T_{i,j}+1$ & $\eta_0/2$ & $\eta_0/2$ \\
		$T_{i,j}+2$ & $\eta_0/3$ & $\eta_0/2$ \\
		$T_{i,j}+3$ & $\eta_0/4$ & $\eta_0/4$ \\
		$T_{i,j}+4$ & $\eta_0/5$ & $\eta_0/4$ \\
		$T_{i,j}+5$ & $\eta_0/6$ & $\eta_0/4$ \\
		$T_{i,j}+6$ & $\eta_0/7$ & $\eta_0/4$ \\
		$\vdots$ & $\vdots$ & $\vdots$ \\
		\bottomrule
	\end{tabular} 
		
		\begin{tablenotes}
			\footnotesize
			\item Note: $\eta_0$ denotes the initial stepsize, i.e., $\eta_0=\eta_{i,j}(T_{i,j})$.
		\end{tablenotes}
		\end{threeparttable}
\end{table}
%
Next, the following result holds easily.
\begin{corollary}
	Suppose Assumptions 1-3 hold.
	If $\varepsilon<\frac{1}{n_{\text{tal}}}$ and the stepsize satisfies the \eqref{7-1}, then
	$$\lim_{\kappa_{\iota}\rightarrow\infty}E(t_{\iota+1})=0.$$
\end{corollary}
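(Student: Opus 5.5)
The plan is to reduce Corollary 2 to Theorem 3. Theorem 3 already gives $\lim_{\kappa_{t_\iota}\to\infty}E(t_{\iota+1})=0$ under Assumptions 1--4 with $\varepsilon<\frac{1}{n_{\text{tal}}}$. Assumptions 1--3 and the bound on $\varepsilon$ are hypotheses of the corollary, so it suffices to show that the exploration-interval stepsize \eqref{7-1} satisfies Assumption 4 for every agent $v_{i,j}$. Agents that enter later only shift their clock by $T_{i,j}$, and Assumption 4 concerns only the tail behaviour of each individual sequence. It is therefore enough to check the three properties for a single agent, measuring time from $T_{i,j}$, with $\eta_0=\eta_{i,j}(T_{i,j})$.

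\textbf{Step 1 (monotonicity).} On the $k$-th exploration interval $[\mathcal{T}_{i,j}(k),\mathcal{T}_{i,j}(k+1))$ the stepsize is the constant $\eta_0/a^{k-1}$. Since $a>1$ is an integer, these intervals are nonempty, consecutive, and partition $[T_{i,j},\infty)$. The value $\eta_0/a^{k-1}$ is strictly decreasing in $k$, so $\{\eta_{i,j}(t)\}$ is non-increasing.

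\textbf{Step 2 (divergent sum).} Group the sum by exploration intervals. The $k$-th interval has length $a^{k-1}$ and carries stepsize $\eta_0/a^{k-1}$, so it contributes exactly $\eta_0$. Hence
\[
\sum_{t\ge T_{i,j}}\eta_{i,j}(t)=\sum_{k=1}^{\infty}a^{k-1}\cdot\frac{\eta_0}{a^{k-1}}=\sum_{k=1}^\infty\eta_0=\infty .
\]

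\textbf{Step 3 (square summability).} The same grouping gives
\[
\sum_{t\ge T_{i,j}}\eta_{i,j}^2(t)=\sum_{k=1}^\infty a^{k-1}\frac{\eta_0^2}{a^{2(k-1)}}=\eta_0^2\sum_{k=1}^\infty a^{-(k-1)}=\frac{a\,\eta_0^2}{a-1}<\infty .
\]

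With these three properties Assumption 4 holds, and Theorem 3 yields the claim. The step needing the most care is whether Theorem 3 uses the stepsizes only through Assumption 4 or through some finer structure, such as a comparison between $\eta_{\text{max}}(t)$ and $\eta_{\text{min}}(t)$ across agents with different entry times. Since Theorem 3 is stated only under Assumption 4, I would check that its proof invokes nothing beyond non-increasing, non-summable and square-summable sequences. If it also uses a ratio bound such as $\eta_{\text{inf}}/\eta_{\text{sup}}$ being bounded away from zero on each interval $[t_\iota,t_{\iota+1})$, I would verify it as follows. Within one inter-event interval of length $\kappa$, any agent's stepsize changes by at most a factor on the order of $a\kappa$, because the number of exploration-interval switches grows only logarithmically. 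I would then compare it with the $\eta_0/(t-T_{i,j}+1)$ sequence, which is the one Theorem 3 clearly covers. By construction, $\eta_0/(t-T_{i,j}+1)\le\eta_{i,j}(t)\le a\,\eta_0/(t-T_{i,j}+1)$, so any such ratio condition transfers with an extra constant $a$.
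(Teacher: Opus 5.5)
Your proposal matches the paper's own argument: check that the stepsize \eqref{7-1} satisfies Assumption 4 by grouping the sums over exploration intervals, then invoke Theorem 3. The paper does this only for $a=2$, and it does not address monotonicity or the $\eta_{\text{max}}(t)$ versus $\eta_{\text{min}}(t)$ question you flag.

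There is one slip, which you inherited from the paper's text. With $\mathcal{T}_{i,j}(k)=T_{i,j}+a^{k-1}-1$, the $k$-th interval has length $a^{k}-a^{k-1}=(a-1)a^{k-1}$, which equals $a^{k-1}$ only when $a=2$. So each interval contributes $(a-1)\eta_0$ rather than $\eta_0$, and $\sum_t\eta_{i,j}^2(t)=a\eta_0^2$ rather than $\frac{a\eta_0^2}{a-1}$. Neither conclusion changes.

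Your sandwich bound $\eta_0/(t-T_{i,j}+1)\le\eta_{i,j}(t)<a\,\eta_0/(t-T_{i,j}+1)$ is correct, since $t-T_{i,j}+1\in[a^{k-1},a^k)$ on the $k$-th interval. It gives both summability properties directly, without computing interval lengths.
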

\begin{proof}
	\indent First, it can be derive that 
	\begin{align*}
		\sum\limits_{t}^{\infty}\eta_{i,j}(t)=\sum\limits_{k}^{\infty}2^{k-1}\frac{\eta_{i,j}(T_{i,j})}{2^{k-1}}=\infty,
	\end{align*}
	and 
	$$
	\sum\limits_{t}^{\infty}\eta_{i,j}^2(t)=\sum\limits_{k}^{\infty}2^{k-1}\frac{\eta^2_{i,j}}{2^{2(k-1)}}<\infty.
	$$
	Next, it follows directly from Theorem 3 that Corollary 2 holds.
\end{proof}
Corollary 2 ensures the convergence of the proposed algorithm when the stepsize is selected according to \eqref{7-1}.

\section{Simulation}
To verify the effectiveness of the proposed distributed Nash equilibrium
seeking algorithm, an open connecting control game is considered as a
numerical example for illustration, which is the dynamic version of the connecting control game in \cite{chen2025non}.
In the game, the participation set, the communication
topologies, and the physical interaction relationships are allowed to vary
over time. 

\begin{table}[htbp]
	\centering
	\caption{Simulation parameters.}
	\label{tab:simulation_parameters}
	
	\renewcommand{\arraystretch}{1.3}  
	
	\setlength{\heavyrulewidth}{1pt}   
	\setlength{\lightrulewidth}{0.5pt} 
	
	\begin{tabular}{lc}
		\toprule
		Parameter & Value \\
		\midrule
		Initial number of coalitions $|\mathcal{C}_0|$ & $6$ \\
		Initial number of agents per coalition $n_{i,0}$ & $5$ \\
		Coalition entry probability $p_c$ & $0.15$ \\
		New coalition size & $1$--$3$ \\
		Agent entry probability $p_a^{\mathrm{in}}$ & $0.60$ \\
		Agent departure probability $p_a^{\mathrm{out}}$ & $0.75$ \\
		Feasible set $\Omega_{i,j}$ & $[-5,5]$ \\
		Mixing parameter $\varepsilon$ & $0.01$ \\
		\bottomrule
	\end{tabular}
\end{table}

Initially, the game consists of $
|\mathcal{C}_0|=6, n_{i,0}=5, |\mathcal{V}^{0}|=30 .
$
The maximum number of agents that the platform can accommodate is set as
$n_{\mathrm{tal}}=99$. 
The changes in the participating population are
triggered by random structural events. Specifically, the interval
$\kappa_{t_\iota}$ between two consecutive events follows a geometric
distribution:
\begin{equation*}
	\Pr(\kappa_{t_\iota}=k)
	=(1-p)^{k-1}p,\quad k=1,2,\ldots,
\end{equation*}
where $p=1/1000$, resulting in
$\mathbb{E}[\kappa_{t_\iota}]=1000$.
At each structural event, a new coalition is generated with probability
$p_c=0.15$, and its initial size is randomly selected from
$\{1,2,3\}$.
An individual agent joins an existing coalition with probability
$p_a^{\mathrm{in}}=0.60$, where the destination coalition is uniformly
selected from the current participating coalitions. 
Moreover, an agent
departure occurs with probability $p_a^{\mathrm{out}}=0.75$, where the leaving
agent is randomly selected from the participating agents. 
If the departing
agent is not the last member of its coalition, at least one intra-coalition
neighbor is required to remain.
 Otherwise, the departure is regarded as
infeasible. 
When the last agent leaves a coalition, the coalition naturally
disappears from the game. 
The evolution of the numbers of participating
agents and coalitions is illustrated in Figs. 2 and 3.

To characterize the dynamic communication and interaction relationships,
three types of graphs are reconstructed after each structural event, including the intra-coalition
communication topology, the inter-coalition communication topology, and the
physical interaction topology. 
The intra-coalition communication topology is
generated as a ring graph.
The inter-coalition communication topology is generated based on the
Erdős--Rényi random graph model, and only connected realizations are retained.
The physical interaction topology is independently generated to describe the
coupling relationships among agents.

\begin{figure}
	\centering
	\includegraphics[height=0.55\linewidth,width=0.73\linewidth]{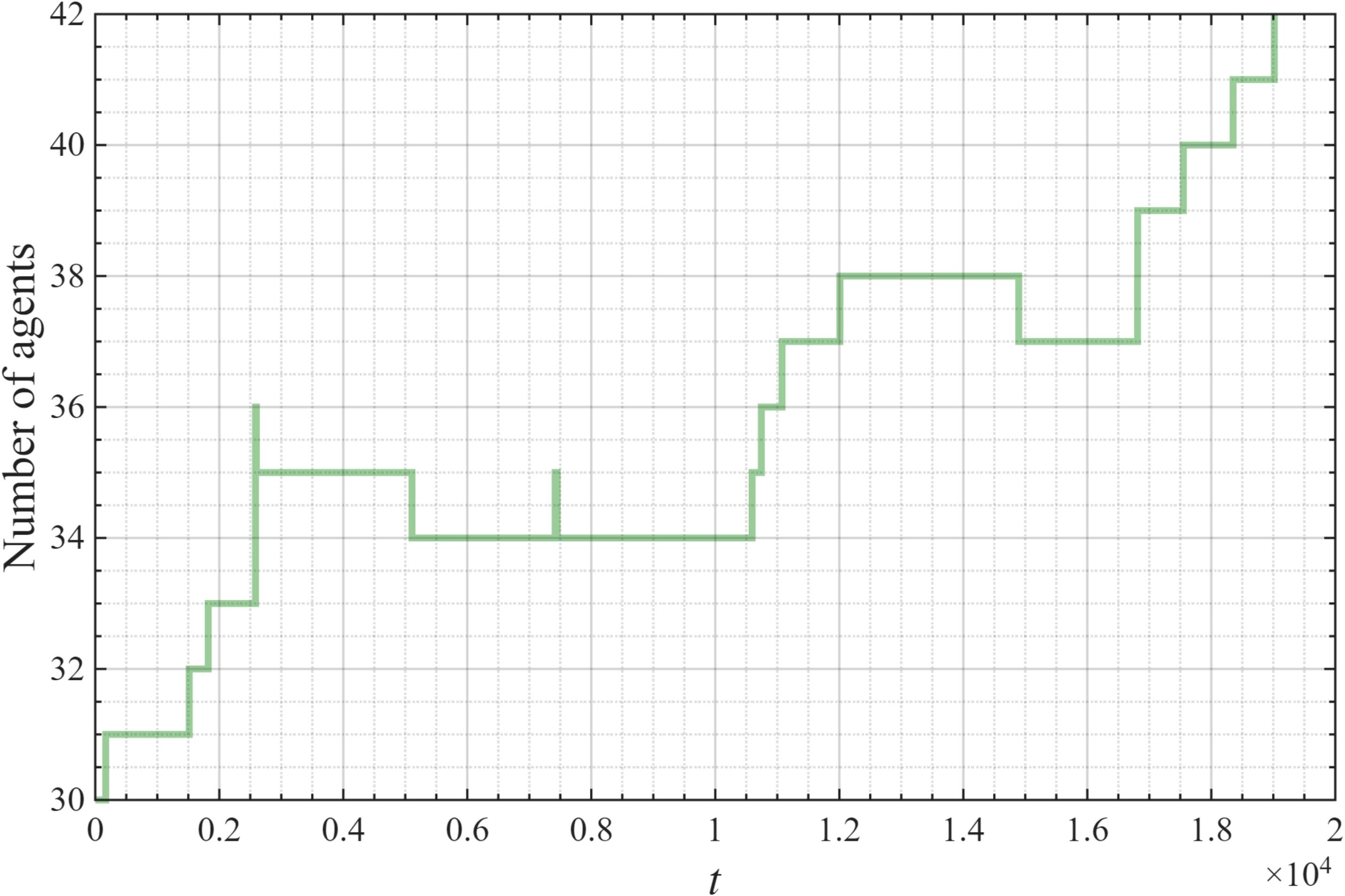}
	\captionsetup{font=footnotesize}
	\caption{The evolution of the number of participating agents over time.}
\end{figure}
\begin{figure}
	\centering
	\includegraphics[height=0.55\linewidth,width=0.73\linewidth]{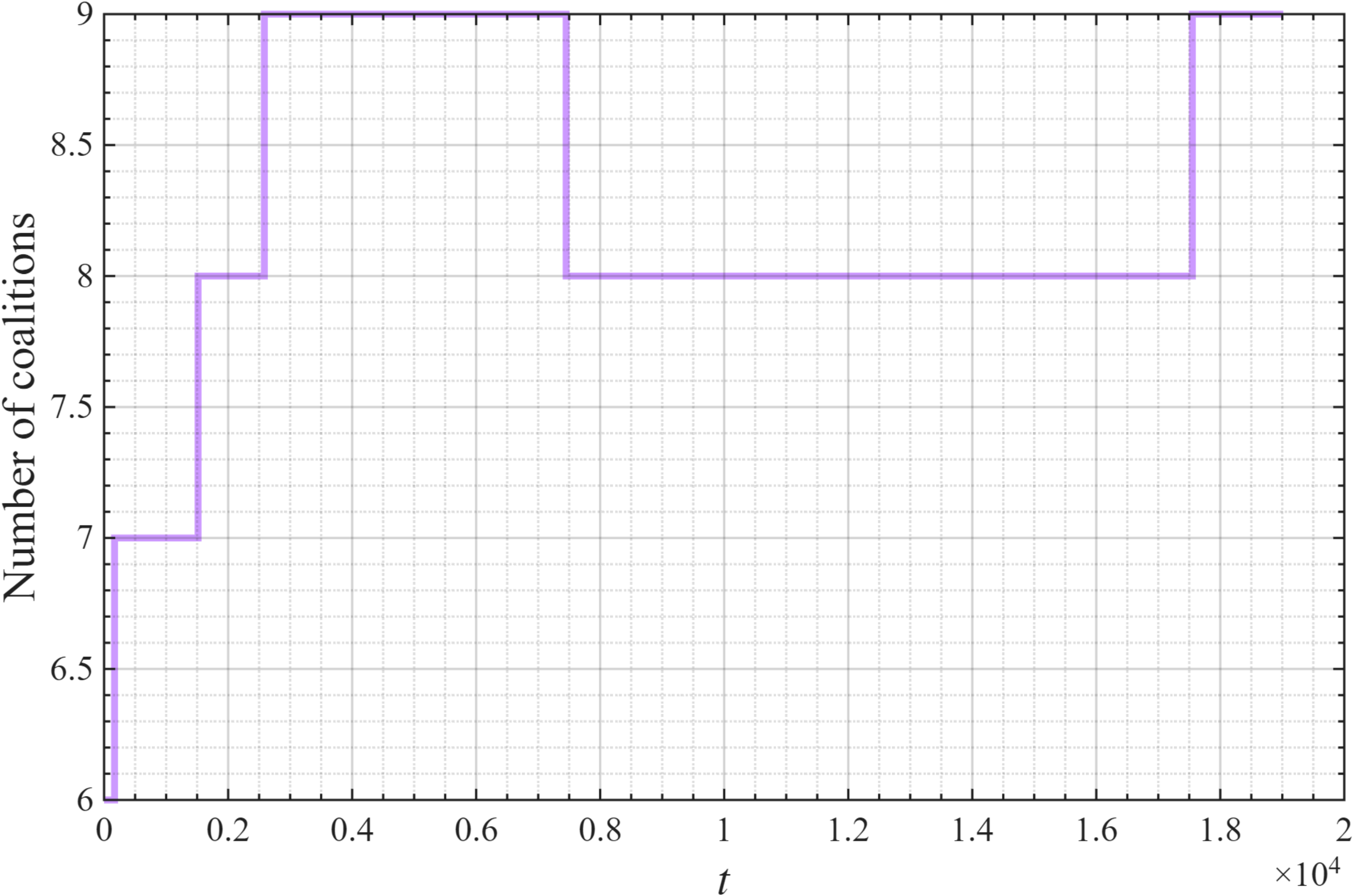}
	\captionsetup{font=footnotesize}
	\caption{The evolution of the number of participating coalitions over time.}
\end{figure}

In the game, for each agent $v_{i,j}$, the decision variable $x_{i,j}$ is constrained in
the compact convex set
$\Omega_{i,j}=[-5,5]$.
The local cost function of $v_{i,j}$ is defined
as $f_{i,j}^{t_\iota}(x)
	=
	r_i x_{i,j}^{2}
	+
	d_i x_{i,j}
	+
	b_i
	+
	\sum_{v_{h,l}\in\mathcal{S}_{i,j}^{t_\iota}}
	\left(
	x_{i,j}-x_{h,l}
	\right)^2,
$
where  $r_i$, $d_i$, and $b_i$ are constants randomly selected according to the uniform distribution, and $\mathcal{S}_{i,j}^{t_\iota}$ denotes the set of
physical interaction neighbors of agent $v_{i,j}$.
The first three terms
represent the individual cost, while the last term describes the coupling
effect induced by physical interactions.
The cost function of coalition $i$
is given by
$	f_i(x_i,x_{-i})
	=
	\sum_{v_{i,j}\in\mathcal{V}_{i}^{t}}
	f_{i,j}^{t}(x).
$
Since the Nash equilibrium varies with the participating population and
interaction relationships, it is recalculated at the initial time and after
each structural event.
In the simulation, numerical methods are employed to
obtain the Nash equilibrium $\hat{x}_{t_\iota}^{*}$, which is used as the
reference for evaluating the convergence performance. 
Then, the convergence
error is defined as
$$
\|e(t)\|_F
=
\left\|
\begin{bmatrix}
	Y(t)-\mathbf{1}_{\widetilde n_{t_\iota}}
	\hat{x}_{t_\iota}^{*\top}\\
	S(t)-\bar S(t)
\end{bmatrix}
\right\|_F .
$$
The algorithm parameter is set as $\varepsilon=0.01$.

To verify the effectiveness of the proposed mass-preserving gradient-tracking
technique, Fig. 4 presents the evolution of
$\|\bar S(t)-\bar Q_{t_\iota}(t)\|_F$ under the fixed stepsize
$\eta_{i,j}=0.01$. 
It can be observed that the tracking error increases when
structural events occur due to the changes in the participating population
and gradient information. 
After the events, the tracking error is immediately restored
to zero.
This result verifies the effectiveness of the proposed mass-preserving gradient-tracking technique for open dynamic systems and is consistent with the conclusion of Theorem 1.

\begin{figure}
	\centering
	\includegraphics[height=0.55\linewidth,width=0.73\linewidth]{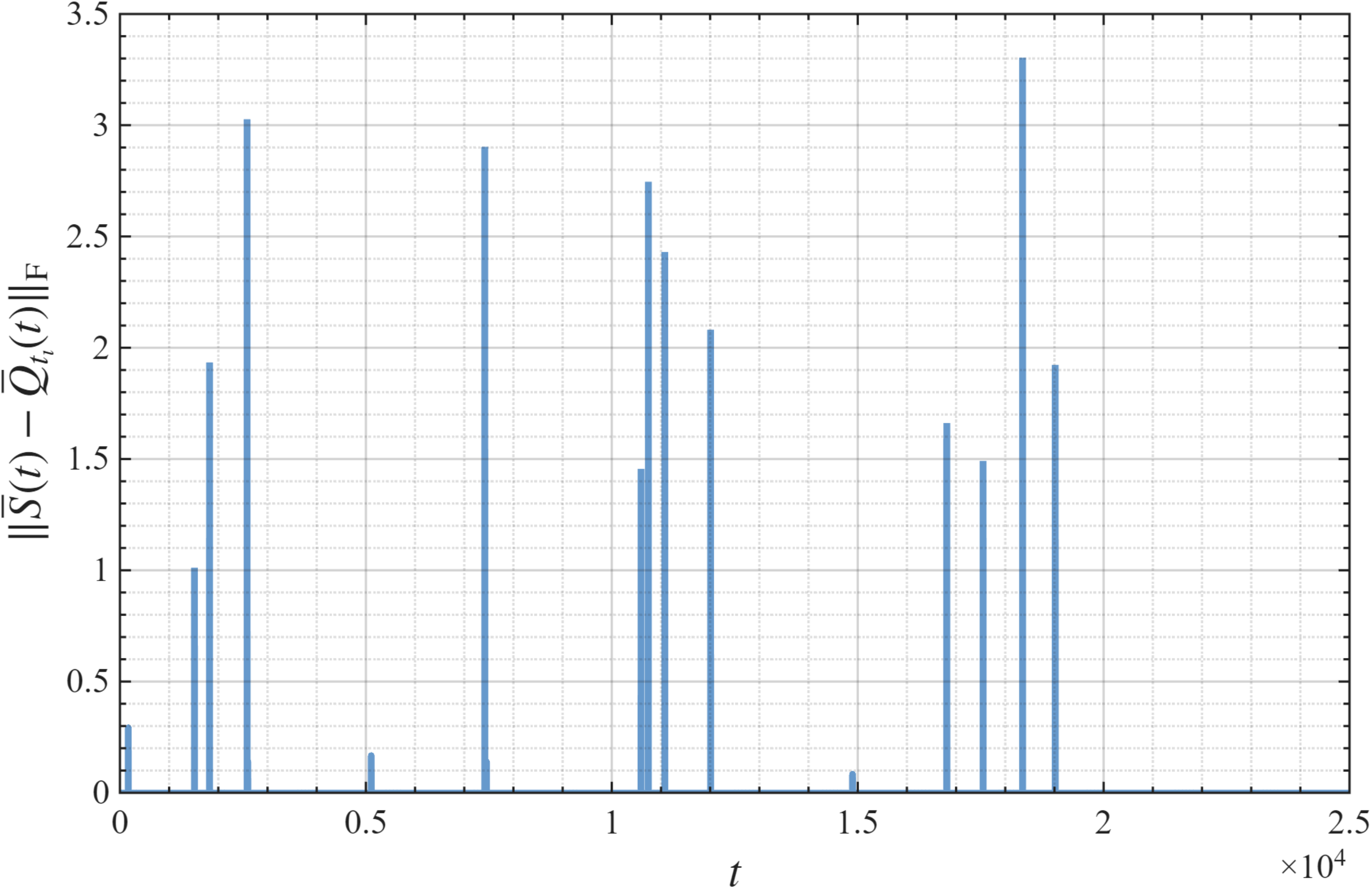}
	\captionsetup{font=footnotesize}
	\caption{The trajectory of $\Vert \bar{S}(t)-\bar{Q}_{t_\iota}(t)\Vert_F$ under the fixed stepsize $\eta_{i,j}=0.01$.}
\end{figure}

Fig. 5 compares the convergence performance of the proposed algorithm under
three different stepsize schemes: the fixed stepsize
$\eta_{i,j}=0.01$, the classical diminishing stepsize
$\frac{5}{t-T_{i,j}+1}$, and the exponential-interval diminishing stepsize
$$
\eta_{i,j}(t)=\frac{5}{2^{k-1}},
\quad
t\in[\mathcal{T}_{i,j}(k),\mathcal{T}_{i,j}(k+1)),
$$
where
$\mathcal{T}_{i,j}(k)=T_{i,j}+2^{k-1}-1 .
$
It can be seen that all three schemes drive the algorithm toward the Nash
equilibrium. 
In the early stage, the classical diminishing stepsize achieves
better convergence performance than the fixed stepsize scheme because of its
larger initial stepsize, while the exponential-interval diminishing stepsize
provides an intermediate performance. 
As time increases, the convergence
speed of the fixed stepsize scheme gradually becomes faster than those of the
two diminishing stepsize schemes. 
Moreover, the exponential-interval
diminishing stepsize converges faster than the classical diminishing
stepsize, since its stepsize remains unchanged within each exploration
interval and decreases only when the interval is enlarged. 
Therefore, it
maintains a larger effective stepsize for a longer period. 
Finally, the error
curve under the fixed stepsize scheme exhibits an almost constant decay rate,
whereas the decay rates of both diminishing stepsize schemes gradually
decrease over time. 
This observation agrees with the stepsize analysis
presented previously.

\begin{figure}
	\centering
	\includegraphics[height=0.55\linewidth,width=0.73\linewidth]{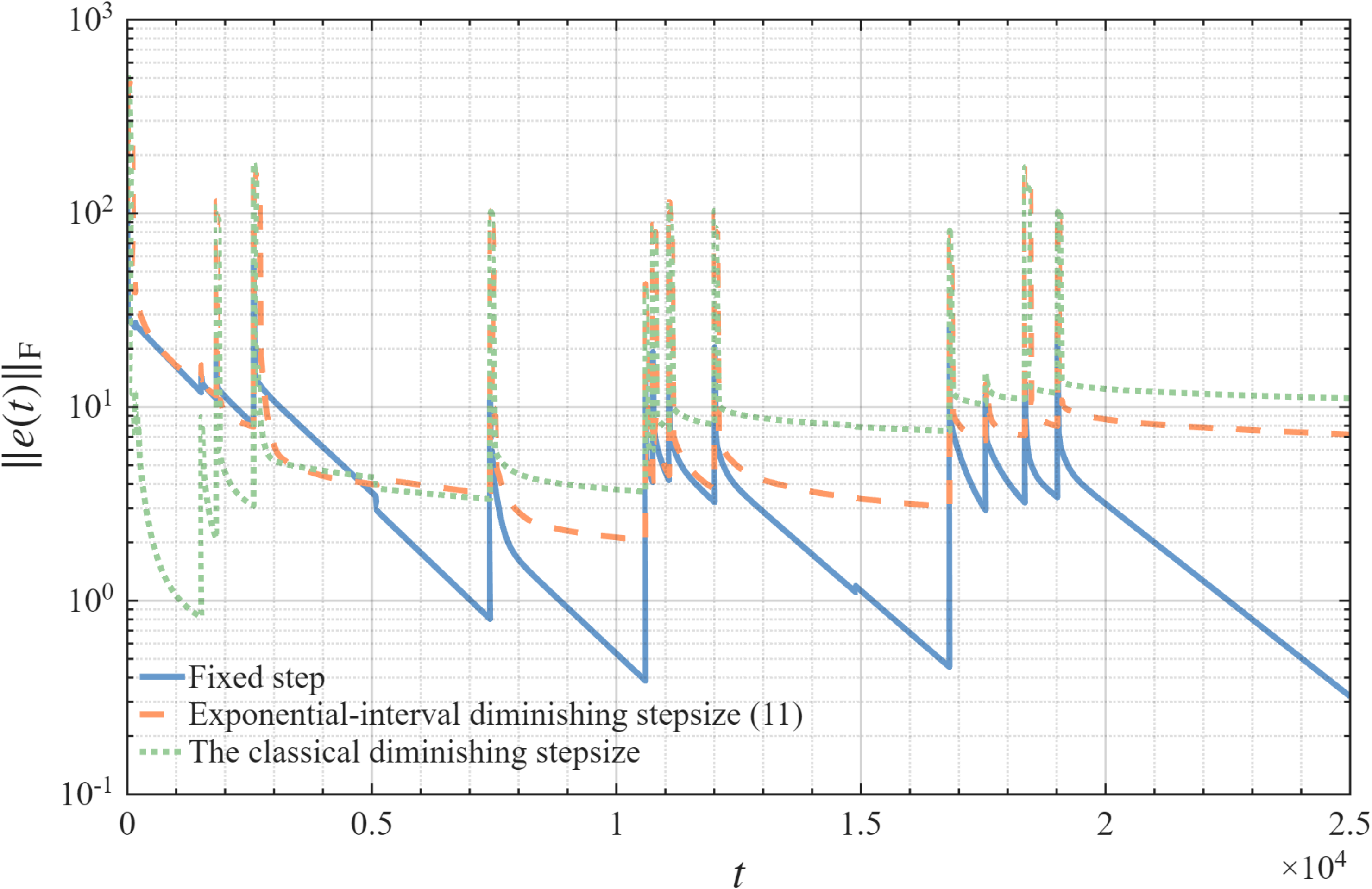}
	\captionsetup{font=footnotesize}
	\caption{The trajectories of $\Vert e(t)\Vert_F$ under different stepsizes.}
\end{figure}



\section{Conclusion}
To address the loss of gradient information induced by agent departures,
a mass-preserving gradient-tracking mechanism is developed for open
multi-coalition games. 
By redistributing the gradient information
associated with departing agents to their remaining neighbors within the same
coalition, the proposed mechanism preserves the aggregate gradient information
of the coalition objective and maintains gradient tracking under dynamic
population changes without imposing additional initialization conditions beyond
those required by conventional gradient-tracking methods. 
Since the
multi-coalition game reduces to distributed optimization when the number of
coalitions is one, the proposed mechanism also provides a gradient-tracking
scheme for open distributed optimization.

Building on the proposed mechanism, a distributed Nash equilibrium seeking
framework is developed for open multi-coalition games.
Convergence conditions
are established for both agent-dependent fixed stepsizes and diminishing
stepsizes, and a convergence error bound is derived for each interval between
two consecutive structural events.
To further mitigate the convergence
slowdown caused by the continuously decreasing stepsize in the classical
scheme $\eta_0/t$, an exponential-interval diminishing stepsize strategy is
introduced. 
Specifically, each agent's time axis is partitioned into a
sequence of exploration intervals with increasing lengths, over which the
stepsize remains unchanged and is reduced only at interval boundaries.
 This
design reduces the frequency of stepsize changes and alleviates the
later-stage convergence degradation associated with classical diminishing
stepsizes.

\appendix
\renewcommand{\proofname}{\bf Proof of Theorem 1}
\begin{proof}
	If $v_{k,m}\notin{(\mathcal{N}_{i,j}^{i,t_{\iota}}\cup\mathcal{V}_i^{t_{\iota}}\cap\mathcal{U}_{i,j}^{t_{\iota}})}$, and $v_{i,j}$ does not maintain $s_{i,j}^{m}(t)$, then it is true that $q_{i,j}^{m,t_{\iota}}(t)=q_{i,j}^{m,t_{\iota}}(t)=0$.
    When $s_{i,j}^{m}(t_\iota)$ does not exist, let's set $s_{i,j}^{m}(t_\iota)=0$.
	Then, \eqref{1-3} can be equivalently transformed into the following form:
	\begin{flalign}\label{5-2}
		&s_{i,j}^m(t+1)=s_{i,j}^m(t)+\varepsilon\sum\limits_{v_{i,l}\in{\mathcal{N}_{i,j}^{i,t_{\iota}}}}[\mathcal{A}^i]_j^l(s_{i,l}^m(t)-s_{i,j}^m(t))&\notag\\
		&+q_{i,j}^{m,t_{\iota}}(t+1)-q_{i,j}^{m,t_{\iota}}(t).
	\end{flalign}
Let $\hat{s}_{i,j}(t_\iota)=s_{i,j}(t_\iota)+\sum_{v_{i,l}\in{\mathcal{D}_{t_\iota}\cap\mathcal{N}_{i,j}^{i,t_{\iota-1}}}}Mass_{i,l}.$
Then, we have
	\begin{flalign*}
		&\sum\limits_{v_{i,j}\in{\mathcal{R}_i^{t_{\iota}}}}\hat{s}_{i,j}(t_\iota)&\notag\\
		&=\sum\limits_{v_{i,j}\in{\mathcal{R}_i^{t_{\iota}}}}s_{i,j}(t_\iota)+\sum\limits_{v_{i,l}\in{\mathcal{V}_i^{t_{\iota-1}}}\cap\mathcal{D}_{t_\iota}}(s_{i,l}(t_\iota)-q^{t_{\iota-1}}_{i,l}(t_\iota))\notag\\
		&=\sum\limits_{v_{i,j}\in{\mathcal{V}_i^{t_{\iota-1}}}}s_{i,j}(t_\iota)-\sum\limits_{v_{i,l}\in{\mathcal{V}_i^{t_{\iota-1}}}\cap\mathcal{D}_{t_\iota}}q^{t_{\iota-1}}_{i,l}(t_\iota)\notag\\
		&=\sum\limits_{v_{i,j}\in{\mathcal{R}_i^{t_{\iota}}}}q_{i,j}^{t_{\iota-1}}(t_\iota).
\end{flalign*}

After the assignment operation \eqref{2-1}, it can be obtained that
\begin{flalign*}
&\sum\limits_{v_{i,j}\in{\mathcal{R}_i^{t_{\iota}}}}s_{i,j}(t_\iota)=\sum\limits_{v_{i,j}\in{\mathcal{R}_i^{t_{\iota}}}}\hat{s}_{i,j}(t_\iota)=\sum\limits_{v_{i,j}\in{\mathcal{R}_i^{t_{\iota}}}}q_{i,j}^{t_{\iota-1}}(t_\iota).
\end{flalign*}
According to $q_{i,j}^{t_{\iota-1}}(t_{\iota})=q_{i,j}^{t{\iota}}(t_\iota)$, $\forall v_{i,j}\in{\mathcal{V}_i^{t_\iota}-\mathcal{V}_i^{t_{\iota-1}}}$, it follows that
\begin{flalign*}
&\sum\limits_{v_{i,j}\in{\mathcal{V}_i^{t_\iota}}}s_{i,j}(t_\iota)=\sum\limits_{v_{i,j}\in{\mathcal{V}_i^{t_\iota}}}q_{i,j}^{t_{\iota-1}}(t_\iota).
\end{flalign*}
By \eqref{5-1} and \eqref{5-2}, it can be obtained that
\begin{flalign*}
	&\sum\limits_{v_{i,j}\in{\mathcal{V}_i^{t_{\iota}}}}s_{i,j}(t_\iota+1)&\\
	&=\sum\limits_{v_{i,j}\in{\mathcal{V}_i^{t_{\iota}}}}s_{i,j}(t_\iota)+\sum\limits_{v_{i,j}\in{\mathcal{V}_i^{t_{\iota}}}}q_{i,j}^{t_\iota}(t_\iota+1)-\sum\limits_{v_{i,j}\in{\mathcal{V}_i^{t_{\iota}}}}q^{t_{\iota-1}}_{i,j}(t_\iota)\\
	&=\sum\limits_{v_{i,j}\in{\mathcal{V}_i^{t_{\iota}}}}q_{i,j}^{t_\iota}(t_\iota+1).
\end{flalign*}
Then, it can be easily derived that
\begin{flalign*}
	&\sum\limits_{v_{i,j}\in{\mathcal{V}_i^{t_{\iota}}}}s_{i,j}(t)=\sum\limits_{v_{i,j}\in{\mathcal{V}_i^{t_{\iota}}}}q_{i,j}^{t_\iota}(t),
\end{flalign*}
for $t_\iota< t\le t_\iota-1$.
Hence, the required property is obtained.
\end{proof}


\renewcommand{\proofname}{\bf Proof of Theorem 2}
\begin{proof}
	First, let's consider the case $t\in{(t_\iota,t_{\iota+1})}$.
	
	For $v_{i,j},v_{k,m}\in{\mathcal{V}^{t_\iota}}$, if $y_{i,j}^{k,m}(t_\iota)$ is unavailable, let's set $y_{i,j}^{k,m}(t_\iota)=0$.
	Then, \eqref{105-1} and \eqref{1-2} can be equivalently transformed into the following forms respectively:
	\begin{flalign*}
		&y_{i,j}^{k,m}(t+1)= y_{i,j}^{k,m}(t)+\varepsilon\!\!\!\!\sum\limits_{v_{h,l}\in{\mathcal{N}_{i,j}^{t_{\iota}}}} [\mathcal{A}_{t_\iota}]_{\tau_{i,j}}^{\tau_{h,l}}(y_{h,l}^{k,m}(t)-y_{i,j}^{k,m}(t))&\notag\\
		&-\eta_{i,j}s^m_{i,j}(t),
	\end{flalign*}
	and 
	\begin{flalign*}
		&y_{i,j}^{k,m}(t+1)= y_{i,j}^{k,m}(t)+\varepsilon\!\!\!\!\sum\limits_{v_{h,l}\in{\mathcal{N}_{i,j}^{t_{\iota}}}} [\mathcal{A}_{t_\iota}]_{\tau_{i,j}}^{\tau_{h,l}}(y_{h,l}^{k,m}(t)-y_{i,j}^{k,m}(t)).
	\end{flalign*}

	Let $P_{t_\iota}=I_{\widetilde{n}_{t_\iota}}-\varepsilon L_{t_\iota}$, where $L_{t_\iota}$ is the Laplacian matrix associated with $\mathcal{A}^{t_\iota}$.
	Let $\hat{P}_{t_\iota}=\mathrm{blk}\{P_{i,t_\iota}\}_{i\in{\mathcal{C}_{t_\iota}}}$, where $P_{i,t_\iota}=I_{n_{i,t_\iota}}-\varepsilon L_{i,t_\iota}$ and $L_{i,t_\iota}$ is the Laplacian matrix associated with $\mathcal{A}_i^{t_\iota}$.
	Let $\hat{\Omega}_{i,t\iota}=\{z\in{\mathbb{R}^{\widetilde{n}_t}|e_{t_\iota}z\in{\Omega_i}}\}$, where $e_{t_\iota}=\mathrm{blk}(\mathcal{O}_{n_{1,t_\iota}},\dots,\mathcal{O}_{n_{i-1,t_\iota}},I_{n_{i,t_\iota}}\mathcal{O}_{n_{i+1,t_\iota}},\dots\mathcal{O}_{n_{N_{t_\iota},t_\iota}})$.
	For $t_\iota$, define a map $\boldsymbol{\mathrm{Proj}}_{\Omega_{t_\iota}}:\mathbb{R}^{\widetilde{n}_t\times \widetilde{n}_t}\rightarrow \mathbb{R}^{\widetilde{n}_t\times \widetilde{n}_t} $ $\boldsymbol{\mathrm{Proj}}_{\Omega_{t_\iota}}(A)$, where $[\boldsymbol{\mathrm{Proj}}_{\Omega_{t_\iota}}(A)]_{\tau_{i,j}}=\mathrm{Proj}_{\hat{\Omega}_{i,t_\iota}}([A]_{\tau_{i,j}})$, $\forall A\in{\mathbb{R}^{\widetilde{n}_t\times \widetilde{n}_t}}$.
	Then, the update process of the algorithm can be characterized by the following iterative equation:
	\begin{equation}\label{6-7}
		\begin{cases}
			Y(t+1)=\boldsymbol{\mathrm{Proj}}_{\Omega_{t_\iota}}\left(P_{t_\iota}Y(t)-\boldsymbol{\eta}S(t)\right),\\
			S(t+1)=\hat{P}_{t_\iota}S(t)+Q_{t_\iota}(t+1)-	Q_{t_\iota}(t),
		\end{cases}
	\end{equation}
	where $\boldsymbol{\eta}=\mathrm{blk}\{\eta_i\}_{i\in{\mathcal{C}_{t_\iota}}}$ and $\eta_i=\mathrm{col}\{\eta_{i.j}\}_{v_{i,j}\in{\mathcal{V}_i^{t_\iota}}}$.
	
	According to the non-expansiveness property of projection, we have
	\begin{flalign}\label{5-3}
		&\Vert Y(t+1)-\mathbf{1}_{\widetilde{n}_{t_\iota}}x_{t_\iota}^*\Vert_F&\notag\\
		&=\left\Vert \boldsymbol{\mathrm{Proj}}_{\Omega_{t_\iota}}\left(P_{t_\iota}Y(t)-\boldsymbol{\eta}S(t)\right)\right.\notag\\
		&\left.-\boldsymbol{\mathrm{Proj}}_{\Omega_{t_\iota}}\left(\mathbf{1}_{\widetilde{n}_{t_\iota}}x_{t_\iota}^*-\boldsymbol{\eta}G_{t_\iota}(x_{t_\iota}^*)\right)\right\Vert_F\notag\\
		&\le\left\Vert P_{t_\iota}Y(t)-\mathbf{1}_{\widetilde{n}_{t_\iota}}x_{t_\iota}^*-\mathbf{\eta}\big(\bar{S}(t)-G_{t_\iota}(x_{t_\iota}^*)\big)\right\Vert_F\notag\\
		&+\eta_{\text{max}}\Vert S(t)-\bar{S}(t)\Vert_F,
	\end{flalign}
	where $\bar{S}(t)=\mathrm{blk}\{\bar{S}_i(t)\}_{i\in{\mathcal{C}_{t_\iota}}}$, $\bar{S}(t)=\frac{\mathbf{1}_{n_{i,t_\iota}}\mathbf{1}_{n_{i,t_\iota}}^\top}{n_{i,t_\iota}}S_i(t)$,  $G_{t_\iota}(x_{t_\iota}^*)=\mathrm{blk}\{\frac{\mathbf{1}_{n_{i,t_\iota}}}{n_{i,t_\iota}}g_{i,t_\iota}(x_{t_\iota}^*)\}_{i\in{\mathcal{C}_{t_\iota}}}$, and $\eta_{\text{max}}=max_{v_{i,j}\in{\mathcal{V}^{t_\iota}}}\{\eta_{i,j}\}$.
	
	First, consider the first item on the right-hand side of \eqref{5-3}. 
	It can be obtained that 
	\begin{flalign}\label{6-1}
		&\left\Vert P_{t_\iota}Y(t)-\mathbf{1}_{\widetilde{n}_{t_\iota}}x_{t_\iota}^*-\boldsymbol{\eta}\big(\bar{S}(t)-G_{t_\iota}(x_{t_\iota}^*)\big)\right\Vert_F^2&\notag\\
		&\le \Vert \phi_{t_\iota}Y(t)-\mathbf{1}_{\widetilde{n}_{t_\iota}}x_{t_\iota}^*\Vert_F^2+\Vert P_{t_\iota}Y_\perp(t)\Vert_F^2\notag\\
		&+\eta_{\mathrm{max}}^2\Vert \bar{S}(t)-G_{t_\iota}(x_{t_\iota}^*)\Vert_F\notag\\
		&-2\left<\phi_{t_\iota}Y(t)-\mathbf{1}_{\widetilde{n}_{t_\iota}}x_{t_\iota}^*,\boldsymbol{\eta}\big(\bar{S}(t)-G_{t_\iota}(x_{t_\iota}^*)\big)\right>\notag\\
		&-2\left<P_{t_\iota}Y_\perp(t),\boldsymbol{\eta}\big(\bar{S}(t)-G_{t_\iota}(x_{t_\iota}^*)\big)\right>,
	\end{flalign}
	where $\phi_{t_\iota}=\frac{\mathbf{1}_{\widetilde{n}_{t_\iota}}\mathbf{1}_{\widetilde{n}_{t_\iota}}^\top}{\widetilde{n}_{t_\iota}}$, and $Y_\perp(t)=(I_{\widetilde{n}_{t_\iota}}-\phi_{t_\iota})Y(t)$.
	By the definition of $Y_\perp(t)$, we have
	\begin{flalign}\label{6-2}
		&\Vert P_{t_\iota}Y_\perp(t)\Vert_F^2&\notag\\
		&=\Vert P_{t_\iota}(I_{\widetilde{n}_{t_\iota}}-\phi_{t_\iota})Y(t)\Vert_F\notag\\
		&=\Vert (P_{t_\iota}-\phi_{t_\iota})(Y(t)-\phi_{t_\iota}Y(t))\Vert_F\notag\\
		&\le \Vert P_{t_\iota}-\phi_{t_\iota}\Vert_2\Vert Y(t)-\phi_{t_\iota}Y(t) \Vert_F\notag\\
		&=\sigma_{t_\iota}\Vert Y_\perp(t)\Vert_F,
	\end{flalign}
	where $\sigma_{t_\iota}=\Vert P_{t_\iota}-\phi_{t_\iota}\Vert_2$.
	According to the property of Frobenius norm and Assumption 1, it can be readily derived that
	\begin{flalign}\label{6-3}
		&\Vert \bar{S}(t)-G_{t_\iota}(x_{t_\iota}^*)\Vert_F&\notag\\
		&=\begin{Vmatrix}
			\mathrm{col}\{\bar{S}_i(t)-\frac{\mathbf{1}_{n_{i,t_\iota}}}{n_{i,t_\iota}}g_i^{t_\iota}(x_{t_\iota}^*)\}_{i\in{\mathcal{C}_{t_\iota}}}
		\end{Vmatrix}_F\notag\\
		&=\left\Vert\mathrm{col}\left\{\frac{\mathbf{1}_{n_{i,t_\iota}}}{n_{i,t_\iota}}\sum\limits_{v_{i,j}\in{\mathcal{V}_i^{t_\iota}}}\bigg(\nabla_i f_{i,j}^{t_\iota}\big(y_{i,j}(t)\big)-f_{i,j}^{t_\iota}\big(x_{t_\iota}^*\big)\bigg)\right\}_{i\in{\mathcal{C}_{t_\iota}}}\right\Vert_F\notag\\
		&\le\Vert \mathrm{col}\{\beta_{i,t_\iota}\Vert Y_i(t)-\mathbf{1}_{n_{i,t_\iota}}x_{t_\iota}^*\Vert_F\}_{i\in{\mathcal{C}_{t_\iota}}}\Vert_F\notag\\
		&\le\beta_{t_\iota}\Vert Y(t)-\mathbf{1}_{\widetilde{n}_{t_\iota}}x_{t_\iota}^*\Vert_F,
	\end{flalign}
	where $\beta_{i,t_\iota}=\max_{v_{i,j}\in{\mathcal{V}_i^{t_\iota}}}\{\beta_{i,j,t_\iota}\}$ and $\beta_{t_\iota}=\max_{i\in{\mathcal{C}_{t_\iota}}}\{\beta_{i,t_\iota}\}.$
	For the fourth item of the \eqref{6-1}, we have
	\begin{flalign}\label{6-4}
		&-2\left<\phi_{t_\iota}Y(t)-\mathbf{1}_{\widetilde{n}_{t_\iota}}x_{t_\iota}^*,\boldsymbol{\eta}\big(\bar{S}(t)-G_{t_\iota}(x_{t_\iota}^*)\big)\right>&\notag\\
		&=-2\left<\phi_{t_\iota}Y(t)-\mathbf{1}_{\widetilde{n}_{t_\iota}}x_{t_\iota}^*,\boldsymbol{\eta}\big(\bar{S}(t)-G_{t_\iota}(\bar{y}(t))\big)\right>\notag\\
		&-2\left<\phi_{t_\iota}Y(t)-\mathbf{1}_{\widetilde{n}_{t_\iota}}x_{t_\iota}^*,\boldsymbol{\eta}\big(G_{t_\iota}(\bar{y}(t))-G_{t_\iota}(x_{t_\iota}^*)\big)\right>\notag\\
		&\le 2\eta_{\text{max}}\Vert \phi_{t_\iota}Y(t)-\mathbf{1}_{\widetilde{n}_{t_\iota}}x_{t_\iota}^*\Vert_F\Vert \bar{S}(t)-G_{t_\iota}\big(\bar{y}(t)\big)\Vert_F\notag\\
		&-2\frac{\alpha_{t_\iota}\eta_{\text{min}}}{\sqrt{\widetilde{n}_{t_\iota}}}\Vert \phi_{t_\iota}Y(t)-\mathbf{1}_{\widetilde{n}_{t_\iota}}x_{t_\iota}^* \Vert_F\notag\\
		&\le 2\beta_{t_\iota}\eta_{\text{max}}\Vert \phi_{t_\iota}Y(t)-\mathbf{1}_{\widetilde{n}_{t_\iota}}x_{t_\iota}^*\Vert_F\Vert Y(t)-\phi_{t_\iota}Y(t)\Vert_F\notag\\
		&-\frac{2\alpha_{t_\iota}\eta_{\text{min}}}{\sqrt{n_{tal}}}\Vert \phi_{t_\iota}Y(t)-\mathbf{1}_{\widetilde{n}_{t_\iota}}x_{t_\iota}^* \Vert_F,
	\end{flalign}
	where $\bar{y}(t)=\frac{\mathbf{1}^\top_{\widetilde{n}_{t_\iota}}Y(t)}{\widetilde{n}_{t_\iota}}$ and $\eta_{\text{min}}=\min_{v_{i,j}\in{\mathcal{V}_i^{t_\iota}}}\{\eta_{i,j}\}$.
	Similarly, we can obtain
	\begin{align}
		\label{6-5}
		&-2\left<P_{t_\iota}Y_\perp(t),\boldsymbol{\eta}\big(\bar{S}(t)-G_{t_\iota}(x_{t_\iota}^*)\big)\right>\notag&\\
		&\le 2\Vert P_{t_\iota}Y_\perp(t)\Vert_F\Vert \bar{S}(t)-G_{t_\iota}(x_{t_\iota}^*)\Vert_F\notag\\
		&\le 2\beta_{t_\iota}\Vert P_{t_\iota}Y_\perp(t)\Vert_F\Vert Y(t)-\mathbf{1}_{\widetilde{n}_{t_\iota}}x_{t_\iota}^*\Vert_F\notag\\
		&\le 2\beta_{t_\iota}\sigma_{t_\iota}\Vert Y_\perp(t)\Vert_F(\Vert \phi_{t_\iota}Y(t)-\mathbf{1}_{\widetilde{n}_{t_\iota}}x_{t_\iota}^*\Vert_F+\Vert Y_\perp(t)\Vert_F).
	\end{align}
   Then, substituting \eqref{6-2}-\eqref{6-5} into \eqref{6-1}, it follows that
   \begin{align}\label{6-6}
   	&\left\Vert P_{t_\iota}Y(t)-\mathbf{1}_{\widetilde{n}_{t_\iota}}x_{t_\iota}^*-\boldsymbol{\eta}\big(\bar{S}(t)-G_{t_\iota}(x_{t_\iota}^*)\big)\right\Vert_F^2&\notag\\
   	&\le \Vert \phi_{t_\iota}Y(t)-\mathbf{1}_{\widetilde{n}_{t_\iota}}x_{t_\iota}^*\Vert_F^2+\sigma_{t_\iota}^2\Vert Y_\perp(t)\Vert_F^2\notag\\
   	&+\eta_{\text{max}}^2\beta_{t_\iota}^2\Vert Y(t)-\mathbf{1}_{\widetilde{n}_{t_\iota}}x_{t_\iota}^*\Vert_F^2\notag\\
   	&+2\beta_{t_\iota}\eta_{\text{max}}\Vert \phi_{t_\iota}Y(t)-\mathbf{1}_{\widetilde{n}_{t_\iota}}x_{t_\iota}^*\Vert_F\Vert Y(t)-\phi_{t_\iota}Y(t)\Vert_F\notag\\
   	&-\frac{2\alpha_{t_\iota}\eta_{\text{min}}}{\sqrt{n_{tal}}}\Vert \phi_{t_\iota}Y(t)-\mathbf{1}_{\widetilde{n}_{t_\iota}}x_{t_\iota}^* \Vert_F^2\notag\\
   	&+2\beta_{t_\iota}\sigma_{t_\iota}\Vert Y_\perp(t)\Vert_F(\Vert \phi_{t_\iota}Y(t)-\mathbf{1}_{\widetilde{n}_{t_\iota}}x_{t_\iota}^*\Vert_F+\Vert Y_\perp(t)\Vert_F)\notag\\
   	&\le(\Vert\phi_{t_\iota}Y(t)-\mathbf{1}_{\widetilde{n}_{t_\iota}}x_{t_\iota}^* \Vert_F, \Vert Y_\perp(t)\Vert_F)
   	\Gamma_{t_\iota}
   	\begin{bmatrix}
   		\Vert\phi_{t_\iota}Y(t)-\mathbf{1}_{\widetilde{n}_{t_\iota}}x_{t_\iota}^* \Vert_F\\
   		\Vert Y_\perp(t)\Vert_F
   	\end{bmatrix}\notag\\
   	&\le \rho(\Gamma_{t_\iota})\Vert Y(t)-\mathbf{1}_{\widetilde{n}_{t_\iota}}x_{t_\iota}^*\Vert_F^2,
   \end{align}
   where 
   $$\Gamma_{t_\iota}=\begin{bmatrix}
   	1+\eta_{\text{max}}^2\beta_{t_\iota}^2-\frac{2\alpha_{t_\iota}\eta_{\text{min}}}{\sqrt{n_{tal}}}&\beta_{t_\iota}\eta_{\text{max}}(1+\sigma_{t_\iota})\\
   	\beta_{t_\iota}\eta_{\text{max}}(1+\sigma_{t_\iota})&\sigma_{t_\iota}^2+\eta_{\text{max}}^2\beta_{t_\iota}^2+2\beta_{t_\iota}\eta_{\text{max}}\sigma_{t_\iota}
   \end{bmatrix}$$ and 
   $\rho(\Gamma_{t_\iota})$ is the Spectral radius of $\Gamma_{t_\iota}$.
   Therefore, 
   \begin{align}
   	\label{6-8}
   	&\Vert Y(t+1)-\mathbf{1}_{\widetilde{n}_{t_\iota}}x_{t_\iota}^*\Vert_F&\notag\\
   	&\le \sqrt{\rho(\Gamma_{t_\iota})}\Vert Y(t)-\mathbf{1}_{\widetilde{n}_{t_\iota}}x_{t_\iota}^*\Vert_F+\eta_{\text{max}}\Vert S(t)-\bar{S}(t)\Vert_F.
   \end{align}
   
   Furthermore, it can be derived that
   \begin{align}\label{6-9}
   	&\Vert S(t+1)-\bar{S}(t+1)\Vert_F&\notag\\
   	&=\Vert S(t+1)-\hat{\phi}_{t_\iota}S(t+1)\Vert_F\notag\\
   	&=\Vert \hat{P}_{t_\iota}S(t)+Q_{t_\iota}(t+1)-Q_{t_\iota}(t)\notag\\
   	&-\hat{\phi}_{t_\iota}\big(S(t)+Q_{t_\iota}(t+1)-Q_{t_\iota}(t)\big)\Vert_F\notag\\
   	&\le\Vert \hat{P}_{t_\iota} -\hat{\phi}_{t_\iota}\Vert_2\Vert S(t)-\bar{S}(t) \Vert_F\notag\\
   	&+\beta_{t_\iota}\Vert I_{\widetilde{n}_{t_\iota}}-\hat{\phi}_{t_\iota}\Vert_2\Vert Y(t+1)-Y(t)\Vert_F\notag\\	
   	&\le \hat{\sigma}_{t_\iota}\Vert S(t)-\bar{S}(t)\Vert_F\notag\\
   	&+\beta_{t_\iota}\big(\Vert  Y(t+1)-\mathbf{1}_{\widetilde{n}_{t_\iota}}x_{t_\iota}^* \Vert_F
   	+\Vert  Y(t)-\mathbf{1}_{\widetilde{n}_{t_\iota}}x_{t_\iota}^* \Vert_F\big).
    \end{align} 
    where $\hat{\phi}_{t_\iota}=\mathrm{blk}\left\{\frac{\mathbf{1}_{n_{i,t_\iota}}\mathbf{1}^\top_{n_{i,t_\iota}}}{n_{i,t_\iota}}\right\}_{i\in{\mathcal{C}_{t_\iota}}}$ and $ \hat{\sigma}_{t_\iota}=\Vert \hat{P}_{t_\iota} -\hat{\phi}_{t_\iota}\Vert_2.$
    
    It follows from \eqref{6-8} and \eqref{6-9} that
    $e(t+1)\le \hat{\Gamma}_{t_\iota} e(t)$, where 
    \begin{equation}\label{8-1}
    	\hat{\Gamma}_{t_\iota}=\begin{bmatrix}
    		\sqrt{\rho(\Gamma_{t_\iota})}&\eta_{\text{max}}\\
    		\beta_{t_\iota}(\sqrt{\rho(\Gamma_{t_\iota})}+1)&\hat{\sigma}_{t_\iota}+\eta_{\text{max}}\beta_{t_\iota}
    	\end{bmatrix}.
    \end{equation}
    Therefore, it can be derived from that
    \begin{align}\label{105-2}
    	\Vert e(t_{\iota+1})\Vert_F&\le\Vert \hat{\Gamma}_{t_\iota}^{\kappa_{t_\iota}-1}\Vert_F\Vert e(t_\iota+1)\Vert_F
    \end{align}
    Next, for the case $t=t_\iota$, it can be readily derived that
    \begin{align}\label{105-3}
    	&\Vert Y(t_\iota+1)-\mathbf{1}_{\widetilde{n}_{t_\iota}}x_{t_\iota}^*\Vert_F&\notag\\
    	&\le \sqrt{\rho(\Gamma_{t_\iota})}\Vert Y(t_\iota)-\mathbf{1}_{\widetilde{n}_{t_\iota}}x_{t_\iota}^*\Vert_F+\eta_{\text{max}}\Vert S(t_\iota)-\bar{S}(t_\iota)\Vert_F,
    \end{align}
    \begin{flalign}
    	&\Vert S(t_\iota+1)-\bar{S}(t_\iota+1)\Vert_F&\notag\\
    	&=\Vert \hat{P}_{t_\iota}S(t_\iota)+Q_{t_\iota}(t_\iota+1)-Q_{t_{\iota-1}}(t_\iota)\notag\\
    	&-\hat{\phi}_{t_\iota}\big(S(t_\iota)+Q_{t_\iota}(t_\iota+1)-Q_{t_{\iota-1}}(t_\iota)\big)\Vert_F\notag\\
    	&\le\Vert \hat{P}_{t_\iota} -\hat{\phi}_{t_\iota}\Vert_2\Vert S(t_\iota)-\bar{S}(t_\iota) \Vert_F\notag\\
    	&+\Vert I_{\widetilde{n}_{t_\iota}}-\hat{\phi}_{t_\iota}\Vert_2\Vert Q_{t_\iota}(t_\iota+1)-Q_{t_{\iota-1}}(t_\iota)\Vert_F\notag\\	
    	&\le \hat{\sigma}_{t_\iota}\Vert S(t_\iota)-\bar{S}(t_\iota)\Vert_F+4\beta_{t_\iota} \widetilde{n}_{t_\iota}B_x,\label{25-1}
    \end{flalign} 
    Consequently, it follows from \eqref{105-2},\eqref{105-3} and \eqref{25-1} that
    \begin{align*}
    	\Vert e(t_{\iota+1})\Vert_F&\le\Vert \hat{\Gamma}_{t_\iota}^{\kappa_{t_\iota}-1}\Vert_F\Vert e(t_\iota+1)\Vert_F\notag\\
    	&\le \Vert \hat{\Gamma}_{t_\iota}^{\kappa_{t_\iota}-1}\Vert_F(\Vert e(t_\iota)\Vert_F+4\beta_{t_\iota} \widetilde{n}_{t_\iota}B_x).
    \end{align*}
    
    Finally, by lemma 3 of \cite{olfati2007consensus} and the theorem 8.5.1 of \cite{horn2012matrix}, then it follows that $\sigma_{t_\iota}<1$ and $\hat{\sigma}_{t_\iota}<1$. 
    If the stepsize satisfies \eqref{6-10}, then it follows from corollary 8.1.29 of \cite{horn2012matrix}
    we have $\rho(\hat{\Gamma}_{\iota})<1$.
    Then, we obtain that
    \begin{align*}
    	\lim\limits_{\kappa_{t_\iota}\rightarrow\infty}\Vert e(t_{\iota+1})\Vert_F=0.
    \end{align*}
     This completes the proof.
    
	\end{proof}

\renewcommand{\proofname}{\bf Proof of Theorem 3}
\begin{proof}
	 Denote $a(t)=\Vert \phi_{t_\iota}Y(t)-\mathbf{1}_{\widetilde{n}_{t_\iota}}x_{t_\iota}^*\Vert_F$, $b(t)=\Vert Y_\perp(t)\Vert_F$ and $c(t)=\Vert S(t)-\bar{S}(t)\Vert_F$.
	$E(t)\coloneqq \Vert Y(t)-\mathbf{1}_{\widetilde{n}_{t_\iota}}x_{t_\iota}^* \Vert_F$.
	Next, we analyze the evolution for $t_{\iota}+1\le t\le t_{\iota+1}$.
	From \eqref{6-6}, we have
	\begin{flalign}\label{105-4}
		&\left\Vert P_{t_\iota}Y(t)-\mathbf{1}_{\widetilde{n}_{t_\iota}}x_{t_\iota}^*-\boldsymbol{\eta(t)}\big(\bar{S}(t)-G_{t_\iota}(x_{t_\iota}^*)\big)\right\Vert_F^2&\notag\\
		&\le(1+\eta_{\text{max}}^2(t)\beta_{t_\iota}^2-\frac{2\alpha_{t_\iota}\eta_{\text{min}}(t)}{\sqrt{\widetilde{n}_{t_\iota}}})a^2(t)\notag\\
		&+2\beta_{t_\iota}\eta_{\text{max}}(t)(1+\sigma_{t_\iota})a(t)b(t)\notag\\
		&+(\sigma_{t_\iota}^2+\eta_{\text{max}}^2(t)\beta_{t_\iota}^2+2\beta_{t_\iota}\eta_{\text{max}}(t)\sigma_{t_\iota})b^2(t)\notag\\
		&\le (1+\eta_{\text{max}}^2(t)\beta_{t_\iota}^2-\frac{\alpha_{t_\iota}\eta_{\text{min}}(t)}{\sqrt{\widetilde{n}_{t_\iota}}})a^2(t)\notag\\
		&+\Bigg(\sigma_{t_\iota}^2+\eta_{\text{max}}^2(t)\beta_{t_\iota}^2+2\beta_{t_\iota}\eta_{\text{max}}(t)\sigma_{t_\iota}\notag\\
		&\left.+\frac{\sqrt{\widetilde{n}_{t_\iota}}\eta_{\text{max}}^2(t)\beta_{t_\iota}^2(1+\sigma_{t_\iota})^2}{\alpha_{t_\iota}\eta_{\text{min}}(t)}\right)b^2(t),
	\end{flalign}
	where the last inequality can be obtained by using Young's inequality.
	It is easy to see that $E(t)=a^2(t)+b^2(t)$.
	Furthermore, when $\kappa_{t_\iota}$ is sufficiently large, then there is a positive constant $\gamma_1$, such that $\forall t>\gamma_1$,  $\eta_{\text{max}}^2(t)\beta_{t_\iota}^2+2\beta_{t_\iota}\eta_{\text{max}}(t)\sigma_{t_\iota}+\frac{\sqrt{\widetilde{n}_{t_\iota}}\eta_{\text{max}}^2(t)\beta_{t_\iota}^2(1+\sigma_{t_\iota})^2}{\alpha_{t_\iota}\eta_{\text{min}}(t)}<\frac{1-\sigma_{t_\iota}^2}{2}$ and $\eta_{\text{max}}^2(t)\beta_{t_\iota}^2<\frac{\alpha_{t_\iota}\eta_{\text{min}}(t)}{2\sqrt{\widetilde{n}_{t_\iota}}}$.
	Then, $\forall t>\gamma_1$, it is follows from \eqref{5-3} and \eqref{105-4} that
{\small	\begin{flalign*}
		&E(t+1)&\\
		&\le\sqrt{E^2(t)+\left(\eta_{\text{max}}^2(t)\beta_{t_\iota}^2-\frac{\alpha_{t_\iota}\eta_{\text{min}}(t)}{\sqrt{\widetilde{n}_{t_\iota}}}\right)a^2(t)-\frac{1-\sigma_{t_\iota}^2}{2}b^2(t)}\notag\\
		&+\eta_{\mathrm{max}}(t)c(t)\notag\\
		&\le \sqrt{E^2(t)-\frac{\alpha_{t_\iota}\eta_{\text{min}}(t)}{2\sqrt{\widetilde{n}_{t_\iota}}}a^2(t)-\frac{1-\sigma_{t_\iota}^2}{2}b^2(t)}+\eta_{\mathrm{max}}(t)c(t).
	\end{flalign*}}
	Squaring both sides of the above equation yields
	\begin{flalign*}
		E^2(t+1)\le D^2_0(t)+\eta_{\text{max}}^2(t)c^2(t)+2\eta_{\mathrm{max}}(t)c(t)D_0(t),
	\end{flalign*}
	where $D_0(t)=\sqrt{E^2(t)-\frac{\alpha_{t_\iota}\eta_{\text{min}}(t)}{2\sqrt{\widetilde{n}_{t_\iota}}}a^2(t)-\frac{1-\sigma_{t_\iota}^2}{2}b^2(t)}.$	
	Due to $D_0(t)\le\sqrt{a(t)+b(t)}$, it follows from Young's inequality that
	\begin{flalign*}
		&2\eta_{\text{max}}c(t)D_0(t)\le \frac{\alpha_{t_\iota}\eta_{\text{min}}(t)}{4\sqrt{\widetilde{n}_{t_\iota}}})a^2(t)+D_2 b^2(t)\notag&\\
		&+(\frac{4\eta_{\text{max}}^2(t)\sqrt{\widetilde{n}_{t_\iota}}}{\alpha_{t_\iota}\eta_{\text{min}}(t)}+\frac{4\eta_{\text{max}}^2(t)c^2(t)}{1-\sigma_{t_\iota}^2})c^2(t).
	\end{flalign*} 
	Therefore, we obtain 
	\begin{flalign*}
		&E^2(t+1)\le E^2(t)-D_1 \eta_{\text{min}}(t)a^2(t)-\frac{1-\sigma_{t_\iota}^2}{4}b^2(t)&\notag\\
		&+\left(\frac{4\sqrt{\widetilde{n}_{t_\iota}}\eta_{\mathrm{max}}^2(t)}{\alpha_{t_\iota}\eta_{\text{min}}(t)}+D_3\eta_{\text{max}}^2(t)\right)c^2(t),
	\end{flalign*}
	where $D_1=\frac{\alpha_{t_\iota}}{4\sqrt{\widetilde{n}_{t_\iota}}}$, $D_2=\frac{1-\sigma_{t_\iota}^2}{4}$ and $D_3=1+\frac{4}{1-\sigma_{t_\iota}^2}$.
	
	To further analyze the evolutionary properties of $E(t)$, we next consider the evolutionary properties of $c(t)$.
	First, since $\Omega_{i,j}$ is bounded and 2) of Assumption 1 holds, it is easy to see that $\Vert G_{t_\iota}(x_{t_\iota}^*)\Vert_F$ is bounded.
	Then, there must be a constant $B_G^{t_\iota}$ such that $\Vert G_{t_\iota}(x_{t_\iota}^*)\Vert_F\le B^{t_\iota}_G,$$\forall t_\iota$.
	It follows that
		\begin{flalign*}
		&\Vert S(t)\Vert_F\le\Vert S(t)-\bar{S}(t)\Vert_F+\Vert \bar{S}(t)-G_{t_\iota}(x_{t_\iota}^*)\Vert_F&\notag\\
		&+\Vert G_{t_\iota}(x_{t_\iota}^*)\Vert_F\notag\\
		&\le c(t)+\beta_{t_\iota} E(t)+ B^{t_\iota}_G.
	\end{flalign*}
	Furthermore, we have 
		\begin{flalign*}
		&\Vert Y(t+1)-Y(t)\Vert_F&\notag\\
		&\le\Vert (P_{t_\iota}-I)(Y(t)-\phi_{t_\iota}Y(t))-\mathbf{\eta}(t)S(t)\Vert_F\notag\\
		&\le\Vert P_{t_\iota}-I\Vert_F b(t)+\eta_{\text{max}}(t)\Vert S(t)\Vert_F\notag\\
		&\le \Vert P_{t_\iota}-I\Vert_F b(t)+\eta_{\text{max}}(t)\big(c(t)+\beta_{t_\iota} E(t)+ B^{t_\iota}_G\big).
	\end{flalign*}
	Therefore, it can be obtained by \eqref{6-9}
	\begin{flalign*}
		&c(t+1)\le\hat{\sigma}_{t_\iota}c(t)+\beta_{t_\iota}\Vert Y(t+1)-Y(t)\Vert_F&\notag\\
		&\le D_4(t)c(t)+D_5b(t)+\eta_{\mathrm{max}}(t)\beta_{t_\iota}^2E(t)+D_6\eta_{\mathrm{max}}(t) B_G^{t_\iota},
	\end{flalign*}
	where $D_4(t)=\hat{\sigma}_{t_\iota}+\eta_{\mathrm{max}}\beta_{t_\iota} $, $D_5=\beta_{t_\iota}\Vert P_{t_\iota}-I\Vert_F$ and $D_6=\beta_{t_\iota} B_G^{t_\iota}$.
    Squaring both sides of the above equation yields
    \begin{flalign*}
    	&c^2(t+1)\le D_4^2(t)c^2(t)+D_5^2b^2(t)+\beta_{t_\iota}^4\eta_{\mathrm{max}}^2(t)E^2(t)&\notag\\
    	&+D_6^2\eta_{\mathrm{max}}^2(t)+2D_4(t)D_5b(t)c(t)+2\eta_{\mathrm{max}}(t)\beta_{t_\iota}^2D_4(t)E(t)c(t)\notag\\
    	&+2D_4(t)D_6\eta_{\mathrm{max}}(t)c(t)+2\beta_{t_\iota}^2D_5\eta_{\mathrm{max}}(t)E(t)b(t)\notag\\
    	&+2D_5D_6\eta_{\mathrm{max}}(t)b(t)+2\beta_{t_\iota}^2 D_6\eta_{\mathrm{max}}^2(t)E(t).
    \end{flalign*}
	According to the Young's inequality, there is a positive constant $\epsilon$ such that 
	\begin{flalign*}
		&2D_4(t)D_5b(t)c(t)<\epsilon D_4^2(t)c^2(t)+\frac{1}{\epsilon}D_5^2b^2(t),\\
		&2\eta_{\mathrm{max}}(t)\beta_{t_\iota}^2D_4(t)E(t)c(t)<\epsilon D_4^2(t)c^2(t)+\frac{1}{\epsilon}\beta_{t_\iota}^4\eta_{\mathrm{max}}^2(t)E^2(t),\\
		&2D_4(t)D_6\eta_{\mathrm{max}}(t)c(t)<\epsilon D_4^2(t)c^2(t)+\frac{1}{\epsilon}D_6^2\eta_{\mathrm{max}}^2(t),\\
		&2\beta_{t_\iota}^2D_5\eta_{\mathrm{max}}(t)E(t)b(t)<D_5^2b^2(t)+\beta_{t_\iota}^4\eta_{\mathrm{max}}^2(t)E^2(t),\\
		&2D_5D_6\eta_{\mathrm{max}}(t)b(t)<D_5^2b^2(t)+D_6^2\eta_{\mathrm{max}}^2(t),\\
		&2\beta_{t_\iota}^2 D_6\eta_{\mathrm{max}}^2(t)E(t)<\beta_{t_\iota}^4\eta_{\mathrm{max}}^2(t)E^2(t)+D_6^2\eta_{\mathrm{max}}^2(t).
	\end{flalign*}
   Then, we have 
   \begin{flalign*}
   	&c^2(t+1)\le (1+3\epsilon)D_4^2(t)c^2(t)+\frac{1+3\epsilon}{\epsilon}D_5^2b(t)&\notag\\
   	&+\frac{1+3\epsilon}{\epsilon}\beta_{t_\iota}^4\eta_{\mathrm{max}}^2(t)E^2(t)+\frac{1+3\epsilon}{\epsilon}D_6^2\eta_{\mathrm{max}}^2(t).
   \end{flalign*}
	Obviously, if $t$ is sufficiently large, $D_4(t)<1$ holds.
	In other words, there is positive constant $\gamma_2$ such that for all $t>\gamma_2$, $D_4(t)<1$ holds.
	Let $D_7=(1+3\epsilon)D_4^2(\gamma_2+1)$.
	Since $D_4(t)$ is monotonically decreasing and bounded, it follows from $\epsilon<\frac{1-D_4^2(\gamma_2+1)}{3D_4^2(\gamma_2+1)}$ that $(1+3\epsilon)D_4^2(t)\le D_7<1$ for all $t>\gamma_2$.
	
	Finally, construct $W(t)=E^2(t)+\lambda c^2(t)$, where $\lambda>0$.
	Then, we have 
	\begin{flalign*}
		&W(t+1)-W(t)\le -D_1\eta_{\text{min}}(t)E^2(t)-D_2b^2(t)&\notag\\
		&+D_3\eta_{\mathrm{max}}^2(t)c^2(t)-\lambda c^2(t)+\lambda D_7c^2(t)+\lambda \frac{1+3\epsilon}{\epsilon}D_5^2b^2(t)\notag\\
		&+\lambda\frac{1+3\epsilon}{\epsilon}\beta_{t_\iota}^4\eta_{\mathrm{max}}^2(t)E^2(t)+\lambda \frac{1+3\epsilon}{\epsilon}D_6^2\eta_{\mathrm{max}}^2(t).
	\end{flalign*}
	Obviously, there must be a small enough $\lambda$ such that $\frac{1+3\epsilon}{\epsilon}\lambda D_5^2<\frac{D_2}{2}$.
	Furthermore, there is $\gamma_3>0$ such that for $t>\gamma_3$ 
	\begin{flalign*}
		&\lambda\beta_{t_\iota}^4\frac{1+3\epsilon}{\epsilon}\eta_{\mathrm{max}}^2(t)<\frac{D_1}{4}\eta_{\text{min}}(t),&\\
		&\lambda\eta_{\mathrm{max}}^2(t)<\frac{D_2}{2},\\
		&D_3\eta_{\mathrm{max}}(t)<\frac{\lambda(1-D_7)}{4}.
	\end{flalign*}
	Then, for $t>\max\{\gamma_1,\gamma_2,\gamma_3\}$, 
	\begin{flalign*}
		&W(t+1)-W(t)\le -\frac{D_1\eta_{\text{min}}(t)}{2}E^2(t)-\frac{D_2}{2}b^2(t)&\notag\\
		&-\lambda\frac{1-D_7}{2}c^2(t)+D_8\eta_{\mathrm{max}}^2(t),
	\end{flalign*}
	where $D_8=\lambda \frac{1+3\epsilon}{\epsilon}D_6^2$.

	Let $\hat{W}(t)\coloneq W(t)+D_8\sum\limits_{\tau=t}^\infty\eta_{\mathrm{max}}^2(\tau)$.
	Then, we have 
	\begin{flalign*}
		&\hat{W}(t_{\iota+1})\le \hat{W}(t_{\iota+1}-1)-\frac{D_1\eta_{\text{min}}(t_{\iota+1}-1)}{2}E^2(t_{\iota+1}-1)\notag\\
		&-\frac{D_2}{2}b^2(t_{\iota+1}-1)-\lambda\frac{1-D_7}{2}c^2(t_{\iota+1}-1)\notag\\
		&\le\hat{W}(t_\iota+1)-\frac{D_1}{2}\sum\limits_{\tau=t_\iota+1}^{t_{\iota+1}-1}\eta_{\text{min}}(\tau)E^2(\tau)-\frac{D_2}{2}\sum\limits_{\tau=t_\iota+1}^{t_{\iota+1}-1}b^2(\tau)\notag\\
		&-\lambda\frac{1-D_7}{2}\sum\limits_{\tau=t_\iota+1}^{t_{\iota+1}-1}c^2(\tau).
	\end{flalign*}
	Since $\hat{W}(t)>0$, $\lim\limits_{\kappa_{t_\iota}\rightarrow\infty}\sum\limits_{\tau=t_\iota+1}^{t_{\iota+1}-1}\eta_{\text{min}}(\tau)E^2(\tau)<\infty$, $\lim\limits_{\kappa_{t_\iota}\rightarrow\infty}\sum\limits_{\tau=t_\iota+1}^{t_{\iota+1}-1}b^2(\tau)<\infty$ and $\lim\limits_{\kappa_{t_\iota}\rightarrow\infty}\sum\limits_{\tau=t_\iota+1}^{t_{\iota+1}-1}c^2(\tau)<\infty$.
	Therefore, it follows that $E(t)$ is bounded for $t\in{[t_\iota+1, t_{\iota+1}]}$, $\lim_{t_{\iota+1}\rightarrow\infty}b(t_{\iota+1})=0$ and $\lim_{t_{\iota+1}\rightarrow\infty}c(t_{\iota+1})=0$. 
	Let's assume that the limit of $E(t)$ does not exist, then its upper limit must be greater than zero.
	Without loss of generality, let $\limsup_{t_{\iota+1}\rightarrow\infty}E(t_{\iota+1})=D_9>0.$
	There exists a convergent subsequence $E(t_{\tau})$ limit equal to $D_9$.
	Then, by $\sum_{t}^{\infty}\eta_{\mathrm{min}}(t)=\infty$, it follows that
	$$\infty<\lim\limits_{\kappa_{t_\iota}\rightarrow\infty}\sum\limits_{t_\tau=t_\iota+1}^{t_{\iota+1}-1}\eta_{\text{min}}(t_\tau)E^2(t_{\tau}).$$
	The above formula creates a contradiction with $\lim\limits_{\kappa_{t_\iota}\rightarrow\infty}\sum_{\tau=t_\iota+1}^{t_{\iota+1}-1}\eta_{\text{min}}(\tau)E^2(\tau)<\infty$, so $\lim_{t_{\iota+1}\rightarrow\infty}E(t_{\iota+1})=0$.
	In conclusion, the algorithm with the stepsize \eqref{7-1} converts to the NE when $\kappa_{t_\iota}$ is large enough.
\end{proof}




\bibliographystyle{IEEEtran}
\bibliography{myref4.bib}

\end{document}